%
%
%
\documentclass[draft]{article}
\hyphenation{al-ge-bra bi-na-ry con-di-tion con-di-tions con-se-quen-ce con-si-dered con-tact con-tra-dic-tion di-rec-tion em-bed-ding equi-va-len-ce equi-va-lent fol-lowing fol-lows fra-me im-mediate in-te-rest lem-ma lo-gic mo-del mo-du-lo ne-ces-sa-ry non-empty ope-ra-tor pa-ra-me-tri-zed pro-po-si-tion quan-ti-fi-er rea-son rea-son-ing re-gu-lar re-la-tion re-la-tional res-pect res-pec-ti-ve-ly struc-tu-re suf-fi-cient sym-me-tric ter-na-ry there-fore to-po-lo-gi-cal to-po-lo-gy Va-ka-re-lov}
\usepackage{amsfonts,amsmath,amssymb,color,graphicx,latexsym,multicol,times}
\newtheorem{example}{Example}
\newtheorem{claim}{Claim}
\newtheorem{lemma}{Lemma}
\newtheorem{proposition}{Proposition}
\newenvironment{proof}{{\bf Proof.}\ }{\hspace*{\fill}$\dashv$}
\title{Representation theorems for extended contact algebras based on equivalence relations}
\author{Philippe Balbiani$^{1}$ and Tatyana Ivanova$^{2}$}
\date{$^{1}$Institut de recherche en informatique de Toulouse
\\
CNRS --- Toulouse University
\\
$^{2}$Institute of Mathematics and Informatics
\\
Bulgarian Academy of Sciences}
\begin{document}
\maketitle
\begin{abstract}
The aim of this paper is to give new representation theorems for extended contact algebras.
These representation theorems are based on equivalence relations.
\\
\\
{\bf Keywords:} Regular closed subsets, Contact algebras, Extended contact algebras, Topological representation, Relational representation.
\end{abstract}
\def\R{\mbox{R\hspace{-.55em}R}}
\def\Q{\mbox{Q\hspace{-.55em}Q}}
\def\Z{\mbox{Z\hspace{-.55em}Z}}
\def\K{\mbox{I\hspace{-.15em}K}}
\def\N{\mathbb{N}}
\section{Introduction}\label{section:Introduction}
Starting with the belief that the spatial entities like points and lines usually considered in Euclidean geometry are too abstract, de Laguna~\cite{deLaguna} and Whitehead~\cite{Whitehead} have put forward other primitive entities like solids, or regions.
Between these entities, they have considered relations of ``connection'' (a ternary relation for de Laguna and a binary relation for Whitehead).
They have also axiomatically defined sets of properties that these relations should possess in order to provide an adequate analog of the reality we perceive about the connection relation between solids, or regions.
The ideas about space of de Laguna and Whitehead have enjoyed an uncommon destiny.
They constitute the basis of multifarious pointless theories of space since the days of Tarski's geometry of solids.
We can cite Grzegorczyk's theory of the binary relations of ``part-of'' and ``separation''~\cite{Grzegorczyk:part:of:separation}, de Vries' compingent algebras~\cite{de:Vries:compingent:algebras} based on a binary relation that would be called today ``non-tangential proper part'', etc.
See~\cite{Gerla:1995} for details.
\\
\\
The reason for the success of the axiomatic method within the context of the region-based theories of space certainly lies in the fact that our perception of space inevitably leads us to think out the relative positions of the objects that occupy space in terms of ``part-of'' and ``separation'', or in terms of ``part-of'' and ``connection''.
After the contributions of Clarke~\cite{Clarke:1981,Clark:1985}, several region-based theories of space have been developed in artificial intelligence and computer science within the context of the so-called area of qualitative spatial reasoning~\cite{Cohn:Renz:2008,Li:Ying:2003,Randell:Cui:Cohn:1992,Renz:2002,Renz:Nebel:1999}.
In these theories, one generally assumes that solids, or regions are regular closed subsets in, for example, the real plane together with its ordinary topology and one generally studies pointless theories of space based --- together with some other relations like ``partial overlap'', ``tangential proper part'', etc --- on the binary relation of ``contact'' which holds between two regular closed subsets when they have common points.
\\
\\
There are mainly two kinds of results: representability in concrete geometrical structures like topological spaces of abstract algebraic structures like contact algebras~\cite{Dimov:Vakarelov:2006:I,Dimov:Vakarelov:2006:II,Duentsch:Vakarelov:2007,Duentsch:Winter:2005}; computability of the satisfiability problem with respect to a quantifier-free first-order language to be interpreted in such contact algebras~\cite{Kontchakov:Nenov:Pratt:Hartmann:Zakharyaschev:2013,Kontchakov:Pratt:Hartmann:Wolter:Zakharyaschev:2010,Kontchakov:Pratt:Hartmann:Zakharyaschev:2010,Kontchakov:Pratt:Hartmann:Zakharyaschev:2014}.
In this context, the unary relation of ``internal connectedness'' has been considered which holds for those regular closed subsets which cannot be represented as the union of two disjoint nonempty open sets.
See the above-mentioned references for details.
As remarked by Ivanova~\cite[Chapter~$2$]{Ivanova:2016} in her doctoral thesis, this unary relation cannot be elementarily defined in terms of the binary relation of ``contact'' within the class of all topological spaces.
This has led her to introduce the ternary relation of ``covering'' which holds between three regular closed subsets when the points common to the first two subsets belong to the third subset.
See also Vakarelov~\cite{Vakarelov:2018} for a multi-ary version of this relation.
\\
\\
By using techniques based on the theory of filters and ideals, Ivanova~\cite[Chapter~$2$]{Ivanova:2016} has proved the representability in ordinary topological spaces of the extended contact algebras that she has defined.
As suggested by the ideas of Galton~\cite{Galton:1999,Galton:2000} and Vakarelov~\cite{Vakarelov:1997}, representability in concrete relational structures like Kripke frames of abstract algebraic structures like contact algebras might be obtained too.
The aim of this paper is to give new representability this time in concrete relational structures for extended contact algebras.
In Section~\ref{section:Contact:and:extended:contact:relations}, we introduce contact and extended contact relations between regions in topological spaces.
Section~\ref{section:Contact:and:extended:contact:algebras} defines contact and extended contact algebras and discusses about their topological and relational representations.
In Sections~\ref{subsection:Equivalence:frames:of:type:1} and~\ref{subsection:Equivalence:frames:of:type:2}, two other kinds of extended contact algebra based on equivalence relations are introduced and the representability of extended contact algebras in them is proved.
\section{Contact and extended contact relations}\label{section:Contact:and:extended:contact:relations}
In this section, we introduce the contact and extended contact relations between regular closed subsets of topological spaces.
\subsection{Topological spaces}
A {\it topological space} is a structure of the form $(X,\tau)$ where $X$ is a nonempty set and $\tau$ is a {\it topology} on $X$, i.e. a set of subsets of $X$ such that the following conditions hold:
\begin{itemize}
\item $\emptyset$ is in $\tau$,
\item $X$ is in $\tau$,
\item if $\{A_{i}:\ i\in I\}$ is a finite subset of $\tau$ then $\bigcap\{A_{i}:\ i\in I\}$ is in $\tau$,
\item if $\{A_{i}:\ i\in I\}$ is a subset of $\tau$ then $\bigcup\{A_{i}:\ i\in I\}$ is in $\tau$.
\end{itemize}
The subsets of $X$ in $\tau$ are called {\it open sets} whereas their complements are called {\it closed sets.}
Let $(X,\tau)$ be a topological space.
For all subsets $A$ of $X$, the {\it interior} of $A$ (denoted $\operatorname{Int}_{\tau}(A)$) is the union of the open subsets $B$ of $X$ such that $B\subseteq A$.
It is the greatest open set contained in $A$.
For all subsets $A$ of $X$, the {\it closure} of $A$ (denoted $\operatorname{Cl}_{\tau}(A)$) is the intersection of the closed subsets $B$ of $X$ such that $A\subseteq B$.
It is the least closed set containing $A$.
A subset $A$ of $X$ is {\it regular closed} iff $\operatorname{Cl}_{\tau}(\operatorname{Int}_{\tau}(A))=A$.
Regular closed subsets of $X$ will also be called {\it regions.}
It is well-known that the set $RC(X,\tau)$ of all regular closed subsets of $X$ forms a Boolean algebra $(RC(X,\tau),0_{X},\star_{X},\cup_{X})$ where for all $A,B\in RC(X,\tau)$,
\begin{itemize}
\item $0_{X}=\emptyset$,
\item $A^{\star_{X}}=\operatorname{Cl}_{\tau}(X\setminus A)$,
\item $A\cup_{X}B=A\cup B$.
\end{itemize}
In this Boolean algebra, of course, we have for all $A,B\in RC(X,\tau)$, $1_{X}=0_{X}^{\star_{X}}$ and $A\cap_{X}B=(A^{\star_{X}}\cup_{X}B^{\star_{X}})^{\star_{X}}$, i.e. $1_{X}=X$ and $A\cap_{X}B=\operatorname{Cl}_{\tau}(\operatorname{Int}_{\tau}(A\cap B))$.
\subsection{Standard contact algebra of regular closed sets}
Given a topological space $(X,\tau)$, since regions are regular closed subsets of $X$, therefore two regions are {\it in contact} iff they have a nonempty intersection.
For this reason, we define the binary relation $C_{X}$ on $RC(X,\tau)$ by
\begin{itemize}
\item $C_{X}(A,B)$ iff $A$ and $B$ have a nonempty intersection.
\end{itemize}
The relation $C_{X}$ is called {\it contact relation} on $RC(X,\tau)$ and we read $C_{X}(A,B)$ as follows: ``$A$ and $B$ are in contact''.
As for the structure $(RC(X,\tau),0_{X},\star_{X},\cup_{X},C_{X})$ based on the set $RC(X,\tau)$ of all regular closed subsets of $X$, it is called the {\it standard contact algebra of regular closed sets.}
It has been studied at great length within the context of first-order mereotopologies~\cite{Pratt:Hartmann:2007} and region-based theories of space~\cite{Balbiani:Tinchev:Vakarelov:2007,Vakarelov:2007}.
Another structure, this time based on the set $RO(X,\tau)$ of all regular open subsets of $X$, i.e. those subsets $A$ of $X$ such that $\operatorname{Int}_{\tau}(\operatorname{Cl}_{\tau}(A))=A$, can be defined as well.
It is the structure $(RO(X,\tau),0_{X},\star_{X},\cup_{X},C_{X})$ called the {\it standard contact algebra of regular open sets} and where for all $A,B\in RO(X,\tau)$,
\begin{itemize}
\item $0_{X}=\emptyset$,
\item $A^{\star_{X}}=\operatorname{Int}_{\tau}(X\setminus A)$,
\item $A\cup_{X}B=\operatorname{Int}_{\tau}(\operatorname{Cl}_{\tau}(A\cup B))$,
\item $C_{X}(A,B)$ iff $\operatorname{Cl}_{\tau}(A)$ and $\operatorname{Cl}_{\tau}(B)$ have a nonempty intersection.
\end{itemize}
At the Boolean level, of course, we have for all $A,B\in RO(X,\tau)$, $1_{X}=0_{X}^{\star_{X}}$ and $A\cap_{X}B=(A^{\star_{X}}\cup_{X}B^{\star_{X}})^{\star_{X}}$, i.e. $1_{X}=X$ and $A\cap_{X}B=A\cap B$.
Since an arbitrary standard contact algebra of regular open sets is isomorphic to the corresponding standard contact algebra of regular closed sets, therefore in this paper, we will only interest with the standard contact algebras of regular closed sets.
In order to give a flavor of the properties of the contact relation, let us remark that for all $A,B,D,E\in RC(X,\tau)$,
\begin{itemize}
\item if $C_{X}(A,B)$, $A\subseteq D$ and $B\subseteq E$ then $C_{X}(D,E)$,
\item if $C_{X}(A\cup_{X}D,B\cup_{X}E)$ then $C_{X}(A,B)$, or $C_{X}(A,E)$, or $C_{X}(D,B)$, or $C_{X}(D,E)$,
\item if $C_{X}(A,B)$ then $A\not=0_{X}$ and $B\not=0_{X}$,
\item if $A\not=0_{X}$ then $C_{X}(A,A)$,
\item if $C_{X}(A,B)$ then $C_{X}(B,A)$.
\end{itemize}
These conditions, or equivalent ones, have given rise in~\cite{Dimov:Vakarelov:2006:I,Dimov:Vakarelov:2006:II} to the algebras of regions known as contact algebras.
See also~\cite{Duentsch:Vakarelov:2007,Duentsch:Winter:2005}.
Representation theorems establishing a correspondence between region-based models such as contact algebras and point-based models such as topological spaces have been obtained.
See Subsection~\ref{subsection:topological:representation:contact:algebras}.
\subsection{Internal connectedness}
Within the context of topological logics over Euclidean spaces~\cite{Kontchakov:Nenov:Pratt:Hartmann:Zakharyaschev:2013,Kontchakov:Pratt:Hartmann:Wolter:Zakharyaschev:2010,Kontchakov:Pratt:Hartmann:Zakharyaschev:2010,Kontchakov:Pratt:Hartmann:Zakharyaschev:2014,Tinchev:Vakarelov:2010}, the {\it relation of internal connectedness} has been considered too.
Given a topological space $(X,\tau)$, the relation of internal connectedness is the unary relation $c^{\circ}_{X}$ on $RC(X,\tau)$ defined by
\begin{itemize}
\item $c^{\circ}_{X}(A)$ iff $\operatorname{Int}_{\tau}(A)$ is connected, i.e. $\operatorname{Int}_{\tau}(A)$ cannot be represented as the union of two disjoint nonempty open sets.
\end{itemize}
We read $c^{\circ}_{X}(A)$ as follows: ``$A$ is internally connected''.
Immediately, the question arises as to whether the relation of internal connectedness can be elementarily defined in terms of the contact relation within the class of all topological spaces.
This question has been negatively answered.
\begin{example}\label{example:about:internal:connectedness:and:contact}
Let $X=\{1,2,3,4,5,6,7\}$ and $\tau$ be the least topology on $X$ containing $\{1,2,3\}$, $\{2,5,7\}$ and $\{3,6,7\}$.
Let $X^{\prime}=\{2,3,4,5,6,7\}$ and $\tau^{\prime}$ be the least topology on $X^{\prime}$ containing $\{2,3\}$, $\{2,5,7\}$ and $\{3,6,7\}$.
Obviously, for all subsets $A$ of $X$, $A\in RC(X,\tau)$ iff $A\setminus\{1\}\in RC(X^{\prime},\tau^{\prime})$.
Moreover, for all $A,B\in RC(X,\tau)$, $C_{X}(A,B)$ iff $C_{X^{\prime}}(A\setminus\{1\},B\setminus\{1\})$.
Hence, the function $f$ associating to each regular closed subset of $X$ a regular closed subset of $X^{\prime}$ defined by $f(A)=A\setminus\{1\}$ is an isomorphism from $(RC(X,\tau),0_{X},\star_{X},\cup_{X},C_{X})$ to $(RC(X^{\prime},\tau^{\prime}),0_{X^{\prime}},\star_{X^{\prime}},\cup_{X^{\prime}},C_{X^{\prime}})$.
Since the regular closed subset $\{1,2,3,4,5,6\}$ of $X$ is internally connected and the regular closed subset $\{2,3,4,5,6\}$ of $X^{\prime}$ is not internally connected, therefore the relation of internal connectedness cannot be elementarily defined in terms of the contact relation within the class of all topological spaces.
See~\cite[Chapter~$2$]{Ivanova:2016} for details.
\end{example}
\subsection{Covering}
Example~\ref{example:about:internal:connectedness:and:contact} has led, given a topological space $(X,\tau)$, to the {\it relation of covering} on $RC(X,\tau)$.
It is the ternary relation $\vdash_{X}$ on $RC(X,\tau)$ defined in~\cite[Chapter~$2$]{Ivanova:2016} by
\begin{itemize}
\item $(A,B)\vdash_{X}D$ iff the intersection of $A$ and $B$ is included in $D$.
\end{itemize}
We read $(A,B)\vdash_{X}D$ as follows: ``$A$ and $B$ are covered by $D$''.
The relation $\vdash_{X}$ is also called {\it extended contact relation} on $RC(X,\tau)$.
Obviously, the contact relation can be elementarily defined in terms of the relation of covering within the class of all topological spaces, seeing that for all $A,B\in RC(X,\tau)$,
\begin{itemize}
\item $C_{X}(A,B)$ iff $(A,B)\not\vdash_{X}\emptyset$.
\end{itemize}
More interestingly, it turns out that the relation of internal connectedness can be as well elementarily defined in terms of the relation of covering within the class of all topological spaces, seeing that for all $A\in RC(X,\tau)$,
\begin{itemize}
\item $c^{\circ}_{X}(A)$ iff for all $B,D\in RC(X,\tau)$, $B,D\not=\emptyset$, if $A=B\cup_{X}D$ then $(B,D)\not\vdash_{X}A^{\star_{X}}$.
\end{itemize}
Since the relation of internal connectedness cannot be elementarily defined in terms of the contact relation within the class of all topological spaces, therefore the relation of covering cannot be elementarily defined in terms of the contact relation within the class of all topological spaces.
The question as to whether the contact relation can be elementarily defined in terms of the relation of internal connectedness within the class of all topological spaces is still open.
In order to give a flavor of the properties of the relation of covering, let us remark that for all $A,B,D,E,F\in RC(X,\tau)$,
\begin{itemize}
\item if $(A,B)\vdash_{X}F$ then $(A\cup_{X}D,B\cup_{X}E)\vdash_{X}D\cup_{X}E\cup_{X}F$,
\item if $(A,B)\vdash_{X}D$, $(A,B)\vdash_{X}E$ and $(D,E)\vdash_{X}F$ then $(A,B)\vdash_{X}F$,
\item if $A\subseteq F$, or $B\subseteq F$ then $(A,B)\vdash_{X}F$,
\item if $(A,B)\vdash_{X}F$ then $A\cap_{X}B\subseteq F$,
\item if $(A,B)\vdash_{X}F$ then $(B,A)\vdash_{X}F$.
\end{itemize}
These conditions, or equivalent ones, have given rise in~\cite[Chapter~$2$]{Ivanova:2016} to the algebras of regions known as extended contact algebras.
Representation theorems establishing a correspondence between region-based models such as extended contact algebras and point-based models such as topological spaces have been obtained.
See Subsection~\ref{subsection:topological:representation:extended:contact:algebras}.
\section{Contact and extended contact algebras}\label{section:Contact:and:extended:contact:algebras}
In this section, we introduce contact and extended contact algebras and we discuss about their topological and relational representations.
\subsection{Contact algebras}
After~\cite{Dimov:Vakarelov:2006:I,Dimov:Vakarelov:2006:II}, a {\it contact algebra} is a structure of the form $({\mathcal R},0_{{\mathcal R}},\star_{{\mathcal R}},\cup_{{\mathcal R}},C_{{\mathcal R}})$ where $({\mathcal R},0_{{\mathcal R}},\star_{{\mathcal R}},\cup_{{\mathcal R}})$ is a non-degenerate Boolean algebra and $C_{{\mathcal R}}$ is a binary relation on ${\mathcal R}$ such that for all $a,b,d,e\in{\mathcal R}$,
\begin{description}
\item[$(CA_{1})$] if $C_{{\mathcal R}}(a,b)$, $a\leq_{{\mathcal R}}d$ and $b\leq_{{\mathcal R}}e$ then $C_{{\mathcal R}}(d,e)$,
\item[$(CA_{2})$] if $C_{{\mathcal R}}(a\cup_{{\mathcal R}}d,b\cup_{{\mathcal R}}e)$ then $C_{{\mathcal R}}(a,b)$, or $C_{{\mathcal R}}(a,e)$, or $C_{{\mathcal R}}(d,b)$, or $C_{{\mathcal R}}(d,e)$,
\item[$(CA_{3})$] if $C_{{\mathcal R}}(a,b)$ then $a\not=0_{{\mathcal R}}$ and $b\not=0_{{\mathcal R}}$,
\item[$(CA_{4})$] if $a\not=0_{{\mathcal R}}$ then $C_{{\mathcal R}}(a,a)$,
\item[$(CA_{5})$] if $C_{{\mathcal R}}(a,b)$ then $C_{{\mathcal R}}(b,a)$.
\end{description}
At the Boolean level, of course, we have for all $a,b\in{\mathcal R}$, $1_{{\mathcal R}}=0_{{\mathcal R}}^{\star_{{\mathcal R}}}$ and $a\cap_{{\mathcal R}}b=(a^{\star_{{\mathcal R}}}\cup_{{\mathcal R}}b^{\star_{{\mathcal R}}})^{\star_{{\mathcal R}}}$.
The elements of ${\mathcal R}$ are called {\it regions.}
\subsection{Topological representation of contact algebras}\label{subsection:topological:representation:contact:algebras}
We have seen, in Section~\ref{section:Contact:and:extended:contact:relations}, that for all topological spaces $(X,\tau)$, the structure $(RC(X,
$\linebreak$
\tau),0_{X},\star_{X},\cup_{X},C_{X})$ based on the set $RC(X,\tau)$ of all regular closed subsets of $X$ is a contact algebra.
With the following proposition~\cite{Dimov:Vakarelov:2006:I,Dimov:Vakarelov:2006:II,Duentsch:Winter:2005}, one can say that standard contact algebras of regular closed sets are typical examples of contact algebras.
\begin{proposition}\label{proposition:topological:representations:of:contact:algebras}
Let $({\mathcal R},0_{{\mathcal R}},\star_{{\mathcal R}},\cup_{{\mathcal R}},C_{{\mathcal R}})$ be a contact algebra.
There exists a topological space $(X,\tau)$ and an embedding of $({\mathcal R},0_{{\mathcal R}},\star_{{\mathcal R}},\cup_{{\mathcal R}},C_{{\mathcal R}})$ in $(RC(X,\tau),0_{X},\star_{X},
$\linebreak$
\cup_{X},C_{X})$.
Moreover, if ${\mathcal R}$ is finite then $X$ is finite and $h$ is surjective.
\end{proposition}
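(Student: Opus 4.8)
The plan is to represent abstract regions by sets of abstract points drawn from the Stone dual of the underlying Boolean algebra, and then to equip these points with a topology fine-tuned so that the abstract relation $C_{\mathcal{R}}$ becomes genuine topological contact, i.e. nonempty intersection of regular closed sets. Concretely, I would let $U(\mathcal{R})$ be the set of ultrafilters of the Boolean reduct $(\mathcal{R},0_{\mathcal{R}},\star_{\mathcal{R}},\cup_{\mathcal{R}})$ and define a binary relation $R$ on $U(\mathcal{R})$ by putting $uRv$ iff $C_{\mathcal{R}}(a,b)$ holds for all $a\in u$ and all $b\in v$. Using $(CA_{4})$ together with the facts that an ultrafilter is closed under $\cap_{\mathcal{R}}$ and omits $0_{\mathcal{R}}$ (so that $a,b\in u$ gives $a\cap_{\mathcal{R}}b\not=0_{\mathcal{R}}$, whence $C_{\mathcal{R}}(a,b)$ by $(CA_{1})$), one checks that $R$ is reflexive, and $(CA_{5})$ gives at once that $R$ is symmetric. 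Thus $(U(\mathcal{R}),R)$ is a reflexive symmetric frame that should encode $C_{\mathcal{R}}$.

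The technical heart, and the step I expect to be the main obstacle, is the faithfulness lemma asserting that for all $a,b\in\mathcal{R}$, $C_{\mathcal{R}}(a,b)$ holds iff there are ultrafilters $u,v$ with $a\in u$, $b\in v$ and $uRv$. The right-to-left direction is immediate from the definition of $R$. For the converse I would argue by Zorn's lemma: starting from the principal filters generated by $a$ and $b$, which by $(CA_{1})$ and $C_{\mathcal{R}}(a,b)$ already satisfy the required cross-contact condition, consider the pairs $(F,G)$ of filters with $a\in F$, $b\in G$ and $C_{\mathcal{R}}(a^{\prime},b^{\prime})$ for all $a^{\prime}\in F$ and $b^{\prime}\in G$, ordered componentwise, and take a maximal such pair $(u,v)$. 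The crux is to show that maximality forces $u$ and $v$ to be ultrafilters: if some $c$ satisfied $c\notin u$ and $c^{\star_{\mathcal{R}}}\notin u$, one tries to enlarge $u$ by $c$ or by $c^{\star_{\mathcal{R}}}$ while preserving cross-contact with $v$, and if both enlargements failed one would obtain $f\in u$ and $b^{\prime}\in v$ with $\neg C_{\mathcal{R}}(f\cap_{\mathcal{R}}c,b^{\prime})$ and $\neg C_{\mathcal{R}}(f\cap_{\mathcal{R}}c^{\star_{\mathcal{R}}},b^{\prime})$, contradicting $(CA_{2})$ applied to $C_{\mathcal{R}}(f,b^{\prime})=C_{\mathcal{R}}((f\cap_{\mathcal{R}}c)\cup_{\mathcal{R}}(f\cap_{\mathcal{R}}c^{\star_{\mathcal{R}}}),b^{\prime})$. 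This is the only place where the non-trivial contact axiom $(CA_{2})$ is genuinely used, and getting this extension argument right is the delicate part.

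With the frame $(U(\mathcal{R}),R)$ and the faithfulness lemma in hand, I would take as underlying set the ``edge set'' $X=\{(u,v)\in U(\mathcal{R})\times U(\mathcal{R}):\ uRv\}$ and define $h(a)=\{(u,v)\in X:\ a\in u\ \mbox{or}\ a\in v\}$. Declaring the sets $\langle a\rangle=\{(u,v)\in X:\ a\in u\ \mbox{and}\ a\in v\}$ to be a base of open sets is legitimate because $\langle a\rangle\cap\langle b\rangle=\langle a\cap_{\mathcal{R}}b\rangle$, and this defines the topology $\tau$. The remaining verifications are routine: $h(0_{\mathcal{R}})=\emptyset$ and $h(a\cup_{\mathcal{R}}b)=h(a)\cup h(b)$ follow from primeness of ultrafilters; one computes $X\setminus h(a)=\langle a^{\star_{\mathcal{R}}}\rangle$ and $\operatorname{Cl}_{\tau}(\langle a\rangle)=h(a)$, which gives complementation at the level of $RC(X,\tau)$, namely $\operatorname{Cl}_{\tau}(X\setminus h(a))=h(a^{\star_{\mathcal{R}}})$; and since $\langle a\rangle$ is open with $\langle a\rangle\subseteq h(a)$ and $h(a)=\operatorname{Cl}_{\tau}(\langle a\rangle)$ is closed, the squeeze $h(a)=\operatorname{Cl}_{\tau}(\langle a\rangle)\subseteq\operatorname{Cl}_{\tau}(\operatorname{Int}_{\tau}(h(a)))\subseteq h(a)$ shows each $h(a)$ is regular closed. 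Injectivity comes from choosing, when $a\not\leq_{\mathcal{R}}b$, an ultrafilter $w$ containing $a\cap_{\mathcal{R}}b^{\star_{\mathcal{R}}}$ and noting $(w,w)\in h(a)\setminus h(b)$, while $h(a)\cap h(b)\not=\emptyset$ iff $C_{\mathcal{R}}(a,b)$ is exactly the faithfulness lemma. Hence $h$ is an embedding into $(RC(X,\tau),0_{X},\star_{X},\cup_{X},C_{X})$.

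Finally, for the ``moreover'' part I would treat the finite case directly, which is cleaner. When $\mathcal{R}$ is finite it is atomic, its ultrafilters are principal and correspond to the atoms, so the edge set $X$ is finite and $R$ reduces to the reflexive symmetric graph on atoms given by $xRy$ iff $C_{\mathcal{R}}(x,y)$ (and $(CA_{1})$, $(CA_{2})$ reduce general contact to the existence of contacting atoms below the arguments). Running the construction over this finite graph yields a finite space, and it remains to check surjectivity: every regular closed subset of $X$ is a finite union of the basic closed pieces $h(x)$ for atoms $x$, hence equals $h(a)$ for the join $a$ of the corresponding atoms, so $h$ is an isomorphism onto $(RC(X,\tau),0_{X},\star_{X},\cup_{X},C_{X})$.
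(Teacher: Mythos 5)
Your proof is correct, but note that the paper itself gives no proof of Proposition~\ref{proposition:topological:representations:of:contact:algebras}: it imports the result from the cited literature (Dimov--Vakarelov, D\"untsch--Winter), where the points of the representing space are \emph{clans} --- grills $G$ such that $C_{{\mathcal R}}(a,b)$ holds for all $a,b\in G$ --- topologized by taking the sets $\{G:\ a\in G\}$ as a base of closed sets. Your route is a genuine and leaner variant. First, your ``faithfulness lemma'' for the ultrafilter frame $(U({\mathcal R}),R)$, proved by the Zorn/maximal-pair argument with the $(CA_{2})$ splitting $f=(f\cap_{{\mathcal R}}c)\cup_{{\mathcal R}}(f\cap_{{\mathcal R}}c^{\star_{{\mathcal R}}})$, is exactly the content of the paper's Proposition~\ref{proposition:relational:representations:of:contact:algebras}; you then build the topological space on the edge set of that frame, so in effect you derive the topological representation from the relational one. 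Since a pair of $R$-related ultrafilters is the same thing as a clan generated by at most two ultrafilters, your space is the subspace of ``$2$-clans'' of the classical construction, and because contact is a binary relation this subspace suffices; moreover your basic opens $\langle a\rangle$ are precisely the complements of the sets $h(a^{\star_{{\mathcal R}}})$, so your topology agrees with the closed-base topology of the clan space restricted to this subspace. What the full clan space buys in the literature (compactness, semiregularity, the ingredients of stronger duality theorems) is simply not needed for the statement at hand. Two points deserve a half-line of tightening, though neither is a gap: the equivalence $h(a)\cap h(b)\not=\emptyset$ iff $C_{{\mathcal R}}(a,b)$ is slightly more than your faithfulness lemma, since a witnessing pair $(u,v)$ may have $a$ and $b$ in the \emph{same} component ultrafilter, in which case you conclude via $a\cap_{{\mathcal R}}b\not=0_{{\mathcal R}}$, $(CA_{4})$ and $(CA_{1})$ (and $(CA_{5})$ in the crossed case); and you should record that $X\not=\emptyset$ (non-degeneracy yields an ultrafilter $u$, and $uRu$ by reflexivity), since the paper's definition of a topological space requires a nonempty carrier. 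The finite ``moreover'' part as you argue it --- principal ultrafilters on atoms, $\operatorname{Cl}_{\tau}(\langle a\rangle)=h(a)$, closure commuting with the (finite) unions, and $h$ preserving joins --- is sound.
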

\subsection{Relational representation of contact algebras}
Another kind of contact algebra has been independently considered by Galton~\cite{Galton:1999,Galton:2000} and Vakarelov~\cite{Vakarelov:1997}.
A {\it frame} is a structure of the form $(W,R)$ where $W$ is a nonempty set and $R$ is a binary relation on $W$.
Given a frame $(W,R)$, let $C_{W}$ be the binary relation on $W$'s powerset defined by
\begin{itemize}
\item $C_{W}(A,B)$ iff there exists $s\in A$ and there exists $t\in B$ such that $R(s,t)$.
\end{itemize}
Assuming that $R$ is reflexive and symmetric, the reader may easily verify that the structure $({\mathcal P}(W),0_{W},\star_{W},\cup_{W},C_{W})$, where $0_{W}$ is the empty set, $\star_{W}$ is the complement operation with respect to $W$ and $\cup_{W}$ is the union operation, is a contact algebra.
At the Boolean level, of course, we have for all $A,B\in{\mathcal P}(W)$, $1_{W}=0_{W}^{\star_{W}}$ and $A\cap_{W}B=(A^{\star_{W}}\cup_{W}B^{\star_{W}})^{\star_{W}}$, i.e. $1_{W}=W$ and $A\cap_{W}B=A\cap B$.
With the following proposition~\cite{Duentsch:Vakarelov:2007}, one can say that these contact algebras are typical examples of contact algebras as well.
\begin{proposition}\label{proposition:relational:representations:of:contact:algebras}
Let $({\mathcal R},0_{{\mathcal R}},\star_{{\mathcal R}},\cup_{{\mathcal R}},C_{{\mathcal R}})$ be a contact algebra.
There exists a frame $(W,R)$ and an embedding of $({\mathcal R},0_{{\mathcal R}},\star_{{\mathcal R}},\cup_{{\mathcal R}},C_{{\mathcal R}})$ in $({\mathcal P}(W),0_{W},\star_{W},\cup_{W},C_{W})$.
\end{proposition}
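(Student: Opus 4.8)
The plan is to adapt Stone's representation theorem by taking for $W$ the set of all ultrafilters of the Boolean reduct $({\mathcal R},0_{{\mathcal R}},\star_{{\mathcal R}},\cup_{{\mathcal R}})$ and by equipping it with a contact relation read off from $C_{{\mathcal R}}$. Since ${\mathcal R}$ is non-degenerate, $W$ is nonempty. On $W$ I would define the binary relation $R$ by setting $R(F,G)$ iff $C_{{\mathcal R}}(x,y)$ for all $x\in F$ and all $y\in G$. Before anything else I would check that $R$ is reflexive and symmetric, so that the remark preceding the proposition guarantees that $({\mathcal P}(W),0_{W},\star_{W},\cup_{W},C_{W})$ is indeed a contact algebra. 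Symmetry is immediate from $(CA_{5})$. For reflexivity, given an ultrafilter $F$ and $x,y\in F$, the meet $x\cap_{{\mathcal R}}y$ again lies in $F$ and is therefore distinct from $0_{{\mathcal R}}$; so $(CA_{4})$ yields $C_{{\mathcal R}}(x\cap_{{\mathcal R}}y,x\cap_{{\mathcal R}}y)$, and $(CA_{1})$ upgrades this to $C_{{\mathcal R}}(x,y)$, whence $R(F,F)$.

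Next I would introduce the candidate embedding $h(a)=\{F\in W:\ a\in F\}$. That $h$ is a Boolean embedding of $({\mathcal R},0_{{\mathcal R}},\star_{{\mathcal R}},\cup_{{\mathcal R}})$ into $({\mathcal P}(W),0_{W},\star_{W},\cup_{W})$ is exactly the content of Stone's theorem, so the only thing left to verify is that $h$ transports $C_{{\mathcal R}}$ to $C_{W}$, that is, $C_{{\mathcal R}}(a,b)$ iff $C_{W}(h(a),h(b))$ for all $a,b\in{\mathcal R}$. The direction from right to left is routine: if $C_{W}(h(a),h(b))$ then there are ultrafilters $F,G$ with $a\in F$, $b\in G$ and $R(F,G)$, and the very definition of $R$ gives $C_{{\mathcal R}}(a,b)$.

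The substance of the argument is the converse. Assuming $C_{{\mathcal R}}(a,b)$, I would consider pairs $(\Phi,\Psi)$ of filters of ${\mathcal R}$ with $a\in\Phi$, $b\in\Psi$ and $C_{{\mathcal R}}(x,y)$ for all $x\in\Phi$ and $y\in\Psi$. The principal filters generated by $a$ and by $b$ form such a pair, using $(CA_{1})$, and the componentwise union of a chain of such pairs is again one of them; so Zorn's Lemma furnishes a maximal pair $(F,G)$. It then suffices to prove that maximal pairs consist of ultrafilters, for then $a\in F$, $b\in G$ and $R(F,G)$ witness $C_{W}(h(a),h(b))$. Suppose $F$ were not an ultrafilter, so that $c\notin F$ and $c^{\star_{{\mathcal R}}}\notin F$ for some $c$. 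Since $c^{\star_{{\mathcal R}}}\notin F$ the filter generated by $F\cup\{c\}$ is proper, and likewise for $c^{\star_{{\mathcal R}}}$; by maximality, adjoining either $c$ or $c^{\star_{{\mathcal R}}}$ must destroy the contact property, so there are $f\in F$ and $g\in G$ with neither $C_{{\mathcal R}}(f\cap_{{\mathcal R}}c,g)$ nor $C_{{\mathcal R}}(f\cap_{{\mathcal R}}c^{\star_{{\mathcal R}}},g)$, the finitely many witnesses being merged by taking meets. This is where $(CA_{2})$ and $(CA_{3})$ do the work: writing $f=(f\cap_{{\mathcal R}}c)\cup_{{\mathcal R}}(f\cap_{{\mathcal R}}c^{\star_{{\mathcal R}}})$ and $g=g\cup_{{\mathcal R}}0_{{\mathcal R}}$, and applying $(CA_{2})$ to $C_{{\mathcal R}}(f,g)$ — which holds because $(F,G)$ has the contact property — forces one of $C_{{\mathcal R}}(f\cap_{{\mathcal R}}c,g)$ or $C_{{\mathcal R}}(f\cap_{{\mathcal R}}c^{\star_{{\mathcal R}}},g)$, the two disjuncts involving $0_{{\mathcal R}}$ being excluded by $(CA_{3})$. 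This contradiction shows $F$, and symmetrically $G$, to be ultrafilters. I expect this last splitting step — extracting from the simultaneous failure of both one-step extensions a single applicable instance of $(CA_{2})$ — to be the crux of the whole proof; the reflexivity check and the invocation of Stone's theorem are comparatively mechanical.
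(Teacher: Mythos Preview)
The paper does not actually prove this proposition: it is quoted as a known result with a reference to D\"untsch and Vakarelov, so there is no in-paper argument to compare against. Your proof is correct and is essentially the standard one found in that literature. The only stylistic difference worth noting is that the paper, when it later proves the analogous representation for weak extended contact algebras (Proposition~\ref{proposition:relational:representations:of:extended:contact:algebras}), builds the two ultrafilters \emph{sequentially}---first separating the filter $t_{b}$ from the ideal $s_{a}^{l}$ to obtain a maximal $t$, then separating $s_{a}$ from $t^{r}$ to obtain a maximal $s$---rather than applying Zorn's Lemma to pairs of filters simultaneously as you do. Both routes are standard; yours is slightly more symmetric, while the sequential route avoids having to check that chains of pairs behave well and makes the appeal to the ultrafilter/separation theorem more explicit. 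Either way the heart of the matter is the splitting step you identify, and your use of $(CA_{2})$ together with $(CA_{3})$ to rule out the $0_{{\mathcal R}}$ disjuncts is exactly right.
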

\subsection{Extended contact algebras}
After~\cite[Chapter~$2$]{Ivanova:2016}, an {\it extended contact algebra} is a structure of the form $({\mathcal R},0_{{\mathcal R}},
$\linebreak$
\star_{{\mathcal R}},\cup_{{\mathcal R}},\vdash_{{\mathcal R}})$ where $({\mathcal R},0_{{\mathcal R}},\star_{{\mathcal R}},\cup_{{\mathcal R}})$ is a non-degenerate Boolean algebra and $\vdash_{{\mathcal R}}$ is a ternary relation on ${\mathcal R}$ such that for all $a,b,d,e,f\in{\mathcal R}$,
\begin{description}
\item[$(ECA_{1})$] if $(a,b)\vdash_{{\mathcal R}}f$ then $(a\cup_{{\mathcal R}}d,b\cup_{{\mathcal R}}e)\vdash_{{\mathcal R}}d\cup_{{\mathcal R}}e\cup_{{\mathcal R}}f$,
\item[$(ECA_{2})$] if $(a,b)\vdash_{{\mathcal R}}d$, $(a,b)\vdash_{{\mathcal R}}e$ and $(d,e)\vdash_{{\mathcal R}}f$ then $(a,b)\vdash_{{\mathcal R}}f$,
\item[$(ECA_{3})$] if $a\leq_{{\mathcal R}}f$, or $b\leq_{{\mathcal R}}f$ then $(a,b)\vdash_{{\mathcal R}}f$,
\item[$(ECA_{4})$] if $(a,b)\vdash_{{\mathcal R}}f$ then $a\cap_{{\mathcal R}}b\leq_{{\mathcal R}}f$,
\item[$(ECA_{5})$] if $(a,b)\vdash_{{\mathcal R}}f$ then $(b,a)\vdash_{{\mathcal R}}f$.
\end{description}
At the Boolean level, of course, we have for all $a,b\in{\mathcal R}$, $1_{{\mathcal R}}=0_{{\mathcal R}}^{\star_{{\mathcal R}}}$ and $a\cap_{{\mathcal R}}b=(a^{\star_{{\mathcal R}}}\cup_{{\mathcal R}}b^{\star_{{\mathcal R}}})^{\star_{{\mathcal R}}}$.
Again, the elements of ${\mathcal R}$ are called {\it regions.}
It is remarkable that the binary relation $C_{{\mathcal R}}$ on ${\mathcal R}$ defined as follows satisfies the conditions $(CA_{1})$--$(CA_{5})$:
\begin{itemize}
\item $C_{{\mathcal R}}(a,b)$ iff $(a,b)\not\vdash_{{\mathcal R}}0_{{\mathcal R}}$.
\end{itemize}
This leads us to associate to each $g\in{\mathcal R}$, the binary relation $RC_{{\mathcal R}}(g)$ on ${\mathcal R}$ of {\it relative contact} defined as follows:
\begin{itemize}
\item $RC_{{\mathcal R}}(g)(a,b)$ iff $(a,b)\not\vdash_{{\mathcal R}}g$.
\end{itemize}
This definition of the binary relation $RC_{{\mathcal R}}(g)$ on ${\mathcal R}$ associated to each $g\in{\mathcal R}$ has interesting consequences.
\begin{proposition}\label{proposition:consequences:definition:rc}
Let $({\mathcal R},0_{{\mathcal R}},\star_{{\mathcal R}},\cup_{{\mathcal R}},\vdash_{{\mathcal R}})$ be an extended contact algebra.
For all $a,b,
$\linebreak$
d,e,g\in{\mathcal R}$, the following conditions hold:
\begin{enumerate}
\item If $RC_{{\mathcal R}}(g)(a,b)$, $a\leq_{{\mathcal R}}d$ and $b\leq_{{\mathcal R}}e$ then $RC_{{\mathcal R}}(g)(d,e)$.
\item If $RC_{{\mathcal R}}(g)(a\cup_{{\mathcal R}}d,b\cup_{{\mathcal R}}e)$ then $RC_{{\mathcal R}}(g)(a,b)$, or $RC_{{\mathcal R}}(g)(a,e)$, or $RC_{{\mathcal R}}(g)(d,
$\linebreak$
b)$, or $RC_{{\mathcal R}}(g)(d,e)$.
\item If $RC_{{\mathcal R}}(g)(a,b)$ then $a\not\leq_{{\mathcal R}}g$ and $b\not\leq_{{\mathcal R}}g$.
\item If $a\not\leq_{{\mathcal R}}g$ then $RC_{{\mathcal R}}(g)(a,a)$.
\item If $RC_{{\mathcal R}}(g)(a,b)$ then $RC_{{\mathcal R}}(g)(b,a)$.
\end{enumerate}
\end{proposition}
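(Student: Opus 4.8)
The plan is to establish each of the five items by contraposition, rewriting a claim about $RC_{{\mathcal R}}(g)$ as a claim about $\vdash_{{\mathcal R}}$ and then invoking the axioms. Three of the items are almost immediate. Item~5 is just the contrapositive of $(ECA_5)$: if $(b,a)\vdash_{{\mathcal R}}g$ then $(a,b)\vdash_{{\mathcal R}}g$. Item~4 is the contrapositive of $(ECA_4)$: if $(a,a)\vdash_{{\mathcal R}}g$ then $a\cap_{{\mathcal R}}a=a\leq_{{\mathcal R}}g$. Item~3 is the contrapositive of $(ECA_3)$: if $a\leq_{{\mathcal R}}g$ or $b\leq_{{\mathcal R}}g$ then $(a,b)\vdash_{{\mathcal R}}g$. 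For item~1 I would reason contrapositively as well: assuming $(d,e)\vdash_{{\mathcal R}}g$ with $a\leq_{{\mathcal R}}d$ and $b\leq_{{\mathcal R}}e$, the two inequalities give $(a,b)\vdash_{{\mathcal R}}d$ and $(a,b)\vdash_{{\mathcal R}}e$ by $(ECA_3)$, and then $(ECA_2)$ applied to these together with $(d,e)\vdash_{{\mathcal R}}g$ yields $(a,b)\vdash_{{\mathcal R}}g$.

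The main obstacle is item~2, whose contrapositive asks me to deduce $(a\cup_{{\mathcal R}}d,b\cup_{{\mathcal R}}e)\vdash_{{\mathcal R}}g$ from the four hypotheses $(a,b)\vdash_{{\mathcal R}}g$, $(a,e)\vdash_{{\mathcal R}}g$, $(d,b)\vdash_{{\mathcal R}}g$ and $(d,e)\vdash_{{\mathcal R}}g$. I expect the right tool to be an auxiliary additivity lemma for $\vdash_{{\mathcal R}}$: for all $f$, if $(a,b)\vdash_{{\mathcal R}}f$ and $(d,b)\vdash_{{\mathcal R}}f$ then $(a\cup_{{\mathcal R}}d,b)\vdash_{{\mathcal R}}f$. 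To prove this lemma I would produce two intermediate covers for the pair $(a\cup_{{\mathcal R}}d,b)$ and collapse them with $(ECA_2)$. From $(a,b)\vdash_{{\mathcal R}}f$, instantiating $(ECA_1)$ with increments $d$ and $0_{{\mathcal R}}$ gives $(a\cup_{{\mathcal R}}d,b)\vdash_{{\mathcal R}}d\cup_{{\mathcal R}}f$; and $(ECA_3)$ gives $(a\cup_{{\mathcal R}}d,b)\vdash_{{\mathcal R}}b\cup_{{\mathcal R}}f$ since $b\leq_{{\mathcal R}}b\cup_{{\mathcal R}}f$. The key observation is that $(d\cup_{{\mathcal R}}f,b\cup_{{\mathcal R}}f)\vdash_{{\mathcal R}}f$ comes essentially for free: it is exactly $(ECA_1)$ applied to $(d,b)\vdash_{{\mathcal R}}f$ with both increments equal to $f$. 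Feeding the two covers $d\cup_{{\mathcal R}}f$ and $b\cup_{{\mathcal R}}f$ together with $(d\cup_{{\mathcal R}}f,b\cup_{{\mathcal R}}f)\vdash_{{\mathcal R}}f$ into $(ECA_2)$ then yields $(a\cup_{{\mathcal R}}d,b)\vdash_{{\mathcal R}}f$.

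With this lemma in hand, item~2 is routine. By $(ECA_5)$ the lemma also holds with the roles of the two arguments swapped, giving additivity in the second coordinate. So from $(a,b)\vdash_{{\mathcal R}}g$ and $(a,e)\vdash_{{\mathcal R}}g$ I obtain $(a,b\cup_{{\mathcal R}}e)\vdash_{{\mathcal R}}g$, from $(d,b)\vdash_{{\mathcal R}}g$ and $(d,e)\vdash_{{\mathcal R}}g$ I obtain $(d,b\cup_{{\mathcal R}}e)\vdash_{{\mathcal R}}g$, and a final application of additivity in the first coordinate to these two statements (whose common second argument is $b\cup_{{\mathcal R}}e$) gives $(a\cup_{{\mathcal R}}d,b\cup_{{\mathcal R}}e)\vdash_{{\mathcal R}}g$. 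The whole difficulty is thus concentrated in finding the intermediate covers $d\cup_{{\mathcal R}}f$ and $b\cup_{{\mathcal R}}f$ for the additivity lemma; once one notices that $(ECA_1)$ with equal increments produces $(d\cup_{{\mathcal R}}f,b\cup_{{\mathcal R}}f)\vdash_{{\mathcal R}}f$, the rest is bookkeeping.
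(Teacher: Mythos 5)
Your proof is correct, and each step checks out: in the additivity lemma, $(ECA_{1})$ applied to $(d,b)\vdash_{{\mathcal R}}f$ with both increments equal to $f$ does give $(d\cup_{{\mathcal R}}f,b\cup_{{\mathcal R}}f)\vdash_{{\mathcal R}}f$, and $(ECA_{2})$ then collapses the two covers $d\cup_{{\mathcal R}}f$ and $b\cup_{{\mathcal R}}f$ of the pair $(a\cup_{{\mathcal R}}d,b)$ exactly as you say; items 3--5 are the contrapositives of $(ECA_{3})$--$(ECA_{5})$, just as in the paper. Where you genuinely diverge from the paper is in items 1 and 2. The paper proves these by invoking its Proposition~\ref{proposition:consequences:definition:eca}, which establishes monotonicity and additivity of $\vdash_{{\mathcal R}}$ for \emph{weak} extended contact algebras from $(WECA_{1})$ and $(WECA_{3})$; applying it to an extended contact algebra rests on the paper's earlier unproven remark that ``obviously'' every extended contact algebra is a weak extended contact algebra. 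The nontrivial content of that remark is precisely what you prove: your item-1 argument (via $(ECA_{3})$ and $(ECA_{2})$) is exactly the derivation of $(WECA_{1})$ from the ECA axioms, and your additivity lemma is the derivation of the union half of $(WECA_{3})$. So your route is more self-contained: it works directly from $(ECA_{1})$--$(ECA_{5})$ and supplies the bridge the paper leaves implicit, at the cost of redoing work the paper gets for free by citing its weak-ECA machinery.
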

\begin{proof}
$(1)$
Suppose $RC_{{\mathcal R}}(g)(a,b)$, $a\leq_{{\mathcal R}}d$ and $b\leq_{{\mathcal R}}e$.
Hence, $(a,b)\not\vdash_{{\mathcal R}}g$.
By items~$(5)$ and~$(6)$ of Proposition~\ref{proposition:consequences:definition:eca}, $(d,e)\not\vdash_{{\mathcal R}}g$.
Thus, $RC_{{\mathcal R}}(g)(d,e)$.
\\
\\
$(2)$
Suppose $RC_{{\mathcal R}}(g)(a\cup_{{\mathcal R}}d,b\cup_{{\mathcal R}}e)$.
Hence, $(a\cup_{{\mathcal R}}d,b\cup_{{\mathcal R}}e)\not\vdash_{{\mathcal R}}g$.
By item~$(3)$ of Proposition~\ref{proposition:consequences:definition:eca}, $(a,b\cup_{{\mathcal R}}e)\not\vdash_{{\mathcal R}}g$, or $(d,b\cup_{{\mathcal R}}e)\not\vdash_{{\mathcal R}}g$.
In the former case, by item~$(4)$ of Proposition~\ref{proposition:consequences:definition:eca}, $(a,b)\not\vdash_{{\mathcal R}}g$, or $(a,e)\not\vdash_{{\mathcal R}}g$.
Thus, $RC_{{\mathcal R}}(g)(a,b)$, or $RC_{{\mathcal R}}(g)(a,e)$.
In the latter case, by item~$(4)$ of Proposition~\ref{proposition:consequences:definition:eca}, $(d,b)\not\vdash_{{\mathcal R}}g$, or $(d,e)\not\vdash_{{\mathcal R}}g$.
Consequently, $RC_{{\mathcal R}}(g)(d,b)$, or $RC_{{\mathcal R}}(g)(d,e)$.
In both cases, $RC_{{\mathcal R}}(g)(a,b)$, or $RC_{{\mathcal R}}(g)(a,e)$, or $RC_{{\mathcal R}}(g)(d,b)$, or $RC_{{\mathcal R}}(g)(d,e)$.
\\
\\
$(3)$
Suppose $RC_{{\mathcal R}}(g)(a,b)$.
Hence, $(a,b)\not\vdash_{{\mathcal R}}g$.
By~$(ECA_{3})$, $a\not\leq_{{\mathcal R}}g$ and $b\not\leq_{{\mathcal R}}g$.
\\
\\
$(4)$
Suppose $a\not\leq_{{\mathcal R}}g$.
By~$(ECA_{4})$, $(a,a)\not\vdash_{{\mathcal R}}g$.
Hence, $RC_{{\mathcal R}}(g)(a,a)$.
\\
\\
$(5)$
Suppose $RC_{{\mathcal R}}(g)(a,b)$.
Hence, $(a,b)\not\vdash_{{\mathcal R}}g$.
By~$(ECA_{5})$, $(b,a)\not\vdash_{{\mathcal R}}g$.
Thus, $RC_{{\mathcal R}}(g)(b,a)$.
\end{proof}
\subsection{Topological representation of extended contact algebras}\label{subsection:topological:representation:extended:contact:algebras}
We have seen, in Section~\ref{section:Contact:and:extended:contact:relations}, that for all topological spaces $(X,\tau)$, the structure $(RC(X,
$\linebreak$
\tau),0_{X},\star_{X},\cup_{X},\vdash_{X})$ based on the set $RC(X,\tau)$ of all regular closed subsets of $X$ is an extended contact algebra.
With the following proposition~\cite[Chapter~$2$]{Ivanova:2016}, one can say that standard extended contact algebras of regular closed sets are typical examples of extended contact algebras.
See also~\cite{Vakarelov:2018}.
\begin{proposition}\label{proposition:topological:representations:of:extended:contact:algebras}
Let $({\mathcal R},0_{{\mathcal R}},\star_{{\mathcal R}},\cup_{{\mathcal R}},\vdash_{{\mathcal R}})$ be an extended contact algebra.
There exists a topological space $(X,\tau)$ and an embedding of $({\mathcal R},0_{{\mathcal R}},\star_{{\mathcal R}},\cup_{{\mathcal R}},\vdash_{{\mathcal R}})$ in $(RC(X,\tau),
$\linebreak$
0_{X},\star_{X},\cup_{X},\vdash_{X})$.
Moreover, if ${\mathcal R}$ is finite then $X$ is finite and $h$ is surjective.
\end{proposition}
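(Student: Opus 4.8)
The plan is to build $X$ from suitable \emph{abstract points} of ${\mathcal R}$. Call a subset $\Gamma\subseteq{\mathcal R}$ a \emph{clan} if it is a grill (that is, $0_{{\mathcal R}}\notin\Gamma$, $\Gamma$ is upward closed, and $a\cup_{{\mathcal R}}b\in\Gamma$ implies $a\in\Gamma$ or $b\in\Gamma$) that is moreover closed under covering: whenever $a,b\in\Gamma$ and $(a,b)\vdash_{{\mathcal R}}f$, then $f\in\Gamma$. Let $X$ be the set of all clans, put $h(a)=\{\Gamma\in X:\ a\in\Gamma\}$ for each $a\in{\mathcal R}$, and topologise $X$ by taking $\{h(a):\ a\in{\mathcal R}\}$ as a basis for the closed sets; this is legitimate since $h(0_{{\mathcal R}})=\emptyset$, $h(1_{{\mathcal R}})=X$, and $h(a)\cup h(b)=h(a\cup_{{\mathcal R}}b)$ by the grill condition. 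It then has to be shown that $h$ is a Boolean embedding into $RC(X,\tau)$ and that it matches $\vdash_{{\mathcal R}}$ with $\vdash_{X}$.

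First I would settle the Boolean and topological bookkeeping. Every ultrafilter $u$ of ${\mathcal R}$ is a clan: it is a grill, and if $a,b\in u$ and $(a,b)\vdash_{{\mathcal R}}f$ then $a\cap_{{\mathcal R}}b\in u$ while $a\cap_{{\mathcal R}}b\leq_{{\mathcal R}}f$ by $(ECA_{4})$, whence $f\in u$. Consequently $h(a)\neq\emptyset$ whenever $a\neq0_{{\mathcal R}}$, so $h$ is injective; together with $h(0_{{\mathcal R}})=\emptyset$ and $h(a\cup_{{\mathcal R}}b)=h(a)\cup h(b)$ this makes $h$ a Boolean homomorphism once $\star$ is handled. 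The heart of this part is the computation, valid because every basic open set has the form $X\setminus h(c)$ and $X\setminus h(c)\subseteq h(a)$ iff $a^{\star_{{\mathcal R}}}\leq_{{\mathcal R}}c$, that $\operatorname{Int}_{\tau}(h(a))=X\setminus h(a^{\star_{{\mathcal R}}})$ and hence $\operatorname{Cl}_{\tau}(\operatorname{Int}_{\tau}(h(a)))=h(a)$ and $\operatorname{Cl}_{\tau}(X\setminus h(a))=h(a^{\star_{{\mathcal R}}})$. This simultaneously shows that each $h(a)$ is regular closed and that $h(a^{\star_{{\mathcal R}}})=h(a)^{\star_{X}}$, so $h$ embeds ${\mathcal R}$ into $(RC(X,\tau),0_{X},\star_{X},\cup_{X})$.

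Next, the covering relation. By definition $(h(a),h(b))\vdash_{X}h(f)$ means $h(a)\cap h(b)\subseteq h(f)$, i.e.\ every clan containing both $a$ and $b$ also contains $f$. The forward implication is immediate from the defining closure property of clans: if $(a,b)\vdash_{{\mathcal R}}f$ and $a,b\in\Gamma$ then $f\in\Gamma$. The converse is the crux and the step I expect to be hardest: assuming $(a,b)\not\vdash_{{\mathcal R}}f$, I must exhibit a clan $\Gamma$ with $a,b\in\Gamma$ and $f\notin\Gamma$. This cannot be done with ultrafilters alone, since $(a,b)\not\vdash_{{\mathcal R}}f$ is compatible with $a\cap_{{\mathcal R}}b\leq_{{\mathcal R}}f$; genuinely non-filter clans (boundary points) are needed. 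My approach is to read $(a,b)\not\vdash_{{\mathcal R}}f$ as $RC_{{\mathcal R}}(f)(a,b)$ and to exploit Proposition~\ref{proposition:consequences:definition:rc}, which says that $RC_{{\mathcal R}}(f)$ obeys the contact-algebra axioms $(CA_{1})$--$(CA_{5})$ with $f$ in the role of $0_{{\mathcal R}}$. The clan-existence argument for contact algebras underlying Proposition~\ref{proposition:topological:representations:of:contact:algebras} then yields a grill $\Gamma$ containing $a$ and $b$ all of whose pairs are in $RC_{{\mathcal R}}(f)$-contact, and such a $\Gamma$ automatically avoids $f$ because $(f,f)\vdash_{{\mathcal R}}f$. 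The delicate point is upgrading $\Gamma$ to one that is closed under the full ternary covering relation: I would pass, by Zorn's lemma, to a maximal grill with these properties and verify, using $(ECA_{1})$--$(ECA_{5})$ and the separation property of maximal clans, that maximality forces covering-closure, the obstacle being that the naive \emph{cut} of $(a,b)\vdash_{{\mathcal R}}g$ with $(g,z)\vdash_{{\mathcal R}}f$ introduces a Boolean meet that need not lie in the grill.

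Finally, for the finite case, if ${\mathcal R}$ is finite then clans are among the finitely many subsets of ${\mathcal R}$, so $X$ is finite. Surjectivity of $h$ onto $RC(X,\tau)$ then follows from the explicit description of the open and closed sets of the finite space $X$: for any regular closed $F$, the identity $\operatorname{Int}_{\tau}(h(a))=X\setminus h(a^{\star_{{\mathcal R}}})$ together with the fact that $\operatorname{Int}_{\tau}$ commutes with finite intersections lets one locate an $a\in{\mathcal R}$ with $F=h(a)$, whence $h$ is an isomorphism.
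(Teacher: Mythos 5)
First, a point of comparison: the paper itself does not prove this proposition --- it is imported from Ivanova's thesis \cite[Chapter~$2$]{Ivanova:2016} --- so your attempt has to be judged on its own merits. Your general set-up is the right kind of construction, and most of your bookkeeping is correct: taking as points the clans (grills closed under $\vdash_{{\mathcal R}}$), taking $\{h(a):\ a\in{\mathcal R}\}$ as a closed base, the computation $\operatorname{Int}_{\tau}(h(a))=X\setminus h(a^{\star_{{\mathcal R}}})$ and its consequences (each $h(a)$ is regular closed and $h(a^{\star_{{\mathcal R}}})=h(a)^{\star_{X}}$), the observation that ultrafilters are clans, the trivial forward half of the equivalence $(a,b)\vdash_{{\mathcal R}}f$ iff $h(a)\cap h(b)\subseteq h(f)$, and the finite case all go through essentially as you say.

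The gap is exactly at the step you yourself call the crux, and your proposed repair is not merely incomplete --- it is false. You claim that a grill $\Gamma$ containing $a,b$, maximal with respect to the property that $(x,y)\not\vdash_{{\mathcal R}}f$ for all $x,y\in\Gamma$, must be closed under covering. Here is a finite counterexample with $f=0_{{\mathcal R}}$. Let $X=\{1,2,3,4,q_{12},q_{13},q_{23}\}$ be the finite topological space whose minimal open neighbourhoods are $U_{i}=\{i\}$ for $i=1,2,3,4$, $U_{q_{12}}=\{q_{12},1,2,4\}$, $U_{q_{13}}=\{q_{13},1,3\}$ and $U_{q_{23}}=\{q_{23},2,3\}$, and let ${\mathcal R}=RC(X,\tau)$; this Boolean algebra has atoms $A_{1}=\{1,q_{12},q_{13}\}$, $A_{2}=\{2,q_{12},q_{23}\}$, $A_{3}=\{3,q_{13},q_{23}\}$, $A_{4}=\{4,q_{12}\}$. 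Let $\Gamma$ be the set of all $c\in{\mathcal R}$ such that $A_{1}\subseteq c$, or $A_{2}\subseteq c$, or $A_{3}\subseteq c$. Then $\Gamma$ is a grill containing $A_{1},A_{2}$; all its pairs satisfy $(x,y)\not\vdash_{X}\emptyset$, since pairwise intersections contain $q_{12}$, $q_{13}$ or $q_{23}$; and $\Gamma$ is maximal with this property, because the only element of ${\mathcal R}$ outside $\Gamma\cup\{\emptyset\}$ is $A_{4}$, and $A_{4}\cap A_{3}=\emptyset$. Yet $A_{1}\cap A_{2}=\{q_{12}\}\subseteq A_{4}$, so $(A_{1},A_{2})\vdash_{X}A_{4}$ while $A_{4}\notin\Gamma$: this maximal $\Gamma$ is not covering-closed, and it is a legitimate output of your Zorn step (it extends the seed consisting of the two principal ultrafilters over $A_{1}$ and $A_{2}$). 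The clan you actually need is the point trace $\{c:\ q_{12}\in c\}$, i.e. the grill generated by $A_{1},A_{2},A_{4}$, but a Zorn extension that has already --- legitimately --- absorbed $A_{3}$ can never reach it. The moral is that the binary relation $RC_{{\mathcal R}}(f)$ retains strictly less information than $\vdash_{{\mathcal R}}$: pairwise compatibility of a grill does not imply extendability to a clan, so the extension lemma cannot be reduced to the contact-algebra clan lemma plus maximality. One must construct the clan so that $\vdash$-closure, a genuinely ternary constraint, is maintained throughout; doing this from $(ECA_{1})$--$(ECA_{5})$ is the real content of the cited proof, and it is missing from yours.
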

\subsection{Relational representation of extended contact algebras}
Another kind of extended contact algebra based on parametrized frames can be considered.
A {\it weak extended contact algebra} is a structure of the form $({\mathcal R},0_{{\mathcal R}},\star_{{\mathcal R}},\cup_{{\mathcal R}},\vdash_{{\mathcal R}})$ where $({\mathcal R},0_{{\mathcal R}},\star_{{\mathcal R}},\cup_{{\mathcal R}})$ is a non-degenerate Boolean algebra and $\vdash_{{\mathcal R}}$ is a ternary relation on ${\mathcal R}$ such that for all $a,b,d,e,f\in{\mathcal R}$,
\begin{description}
\item[$(WECA_{1})$] if $a\leq_{{\mathcal R}}d$, $b\leq_{{\mathcal R}}e$ and $(d,e)\vdash_{{\mathcal R}}f$ then $(a,b)\vdash_{{\mathcal R}}f$,
\item[$(WECA_{2})$] if $a=0_{{\mathcal R}}$, or $b=0_{{\mathcal R}}$ then $(a,b)\vdash_{{\mathcal R}}f$,
\item[$(WECA_{3})$] if $(a,b)\vdash_{{\mathcal R}}f$ and $(d,e)\vdash_{{\mathcal R}}f$ then $(a\cap_{{\mathcal R}}d,b\cup_{{\mathcal R}}e)\vdash_{{\mathcal R}}f$ and $(a\cup_{{\mathcal R}}d,b\cap_{{\mathcal R}}e)\vdash_{{\mathcal R}}f$,
\item[$(WECA_{4})$] if $(a,b)\vdash_{{\mathcal R}}d$ and $d\leq_{{\mathcal R}}f$ then $(a,b)\vdash_{{\mathcal R}}f$.
\end{description}
At the Boolean level, of course, we have for all $a,b\in{\mathcal R}$, $1_{{\mathcal R}}=0_{{\mathcal R}}^{\star_{{\mathcal R}}}$ and $a\cap_{{\mathcal R}}b=(a^{\star_{{\mathcal R}}}\cup_{{\mathcal R}}b^{\star_{{\mathcal R}}})^{\star_{{\mathcal R}}}$.
Again, the elements of ${\mathcal R}$ are called {\it regions.}
Obviously, every extended contact algebra is also a weak extended contact algebra.
What is more, the conditions $(WECA_{1})$--$(WECA_{4})$ have interesting consequences.
\begin{proposition}\label{proposition:consequences:definition:eca}
Let $({\mathcal R},0_{{\mathcal R}},\star_{{\mathcal R}},\cup_{{\mathcal R}},\vdash_{{\mathcal R}})$ be a weak extended contact algebra.
For all $a,b,d,e,f\in{\mathcal R}$, the following conditions hold:
\begin{enumerate}
\item $(0_{{\mathcal R}},1_{{\mathcal R}})\vdash_{{\mathcal R}}a$.
\item $(1_{{\mathcal R}},0_{{\mathcal R}})\vdash_{{\mathcal R}}a$.
\item if $(a,d)\vdash_{{\mathcal R}}e$ and $(b,d)\vdash_{{\mathcal R}}e$ then $(a\cup_{{\mathcal R}}b,d)\vdash_{{\mathcal R}}e$.
\item if $(a,b)\vdash_{{\mathcal R}}e$ and $(a,d)\vdash_{{\mathcal R}}e$ then $(a,b\cup_{{\mathcal R}}d)\vdash_{{\mathcal R}}e$.
\item if $(a,b)\vdash_{{\mathcal R}}e$ and $d\leq_{{\mathcal R}}a$ then $(d,b)\vdash_{{\mathcal R}}e$.
\item if $(a,b)\vdash_{{\mathcal R}}e$ and $d\leq_{{\mathcal R}}b$ then $(a,d)\vdash_{{\mathcal R}}e$.
\end{enumerate}
\end{proposition}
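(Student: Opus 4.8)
The plan is to derive each of the six items directly from the four axioms, grouping the claims into three natural pairs. First I would dispatch items~(1) and~(2): both are mere instances of $(WECA_{2})$. Taking the first component equal to $0_{{\mathcal R}}$ (and the second equal to $1_{{\mathcal R}}$) in $(WECA_{2})$ immediately yields $(0_{{\mathcal R}},1_{{\mathcal R}})\vdash_{{\mathcal R}}a$ for every $a\in{\mathcal R}$, and symmetrically taking the second component equal to $0_{{\mathcal R}}$ gives $(1_{{\mathcal R}},0_{{\mathcal R}})\vdash_{{\mathcal R}}a$, which is item~(2).

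Next I would handle items~(5) and~(6), which express the monotonicity of $\vdash_{{\mathcal R}}$ under shrinking of either of its first two arguments and are direct consequences of $(WECA_{1})$. For item~(5), from $(a,b)\vdash_{{\mathcal R}}e$ together with $d\leq_{{\mathcal R}}a$ and the trivial inequality $b\leq_{{\mathcal R}}b$, the axiom $(WECA_{1})$ gives $(d,b)\vdash_{{\mathcal R}}e$. Item~(6) is obtained in the same way, using $a\leq_{{\mathcal R}}a$ and $d\leq_{{\mathcal R}}b$ to conclude $(a,d)\vdash_{{\mathcal R}}e$.

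The only items requiring a small idea are~(3) and~(4), and each follows from a single well-chosen instantiation of $(WECA_{3})$ combined with idempotency of meet. For item~(3), I would apply $(WECA_{3})$ to the two hypotheses $(a,d)\vdash_{{\mathcal R}}e$ and $(b,d)\vdash_{{\mathcal R}}e$; its second conclusion then reads $(a\cup_{{\mathcal R}}b,d\cap_{{\mathcal R}}d)\vdash_{{\mathcal R}}e$, and since $d\cap_{{\mathcal R}}d=d$ this is exactly $(a\cup_{{\mathcal R}}b,d)\vdash_{{\mathcal R}}e$. For item~(4), I would apply $(WECA_{3})$ to $(a,b)\vdash_{{\mathcal R}}e$ and $(a,d)\vdash_{{\mathcal R}}e$; its first conclusion reads $(a\cap_{{\mathcal R}}a,b\cup_{{\mathcal R}}d)\vdash_{{\mathcal R}}e$, and since $a\cap_{{\mathcal R}}a=a$ this yields $(a,b\cup_{{\mathcal R}}d)\vdash_{{\mathcal R}}e$.

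The main obstacle, insofar as there is one, is purely a matter of bookkeeping for items~(3) and~(4): one must recognize which of the two conclusions of $(WECA_{3})$ to invoke and arrange the two given coverings so that one of the combined arguments collapses under the idempotency of $\cap_{{\mathcal R}}$. Every other step is a direct appeal to an axiom or to an elementary Boolean identity, and I note that $(WECA_{4})$ is not needed for any of these six items.
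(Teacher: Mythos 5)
Your proof is correct and follows essentially the same route as the paper's: items~(1)--(2) from $(WECA_{2})$, items~(5)--(6) from $(WECA_{1})$, and items~(3)--(4) from $(WECA_{3})$. The paper's proof is merely terser, leaving implicit the instantiation and idempotency step ($d\cap_{{\mathcal R}}d=d$, $a\cap_{{\mathcal R}}a=a$) that you spell out explicitly.
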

\begin{proof}
$(1)$
By~$(WECA_{2})$, $(0_{{\mathcal R}},1_{{\mathcal R}})\vdash_{{\mathcal R}}a$.
\\
\\
$(2)$
Similar to $(1)$.
\\
\\
$(3)$
Suppose $(a,d)\vdash_{{\mathcal R}}e$ and $(b,d)\vdash_{{\mathcal R}}e$.
By~$(WECA_{3})$, $(a\cup_{{\mathcal R}}b,d)\vdash_{{\mathcal R}}e$.
\\
\\
$(4)$
Similar to $(3)$.
\\
\\
$(5)$
Suppose $(a,b)\vdash_{{\mathcal R}}e$ and $d\leq_{{\mathcal R}}a$.
By~$(WECA_{1})$, $(d,b)\vdash_{{\mathcal R}}e$.
\\
\\
$(6)$
Similar to $(5)$.
\end{proof}
\\
\\
A {\it parametrized frame} is a structure of the form $(W,R)$ where $W$ is a nonempty set and $R$ is a function associating to each subset of $W$ a binary relation on $W$.
Given a parametrized frame $(W,R)$, let $\vdash_{W}$ be the ternary relation on $W$'s powerset defined by
\begin{itemize}
\item $(A,B)\vdash_{W}D$ iff for all $s\in A$, for all $t\in B$ and for all $U\subseteq W$, if $D\subseteq U$ then not-$R(U)(s,t)$.
\end{itemize}
The reader may easily verify that the structure $({\mathcal P}(W),0_{W},\star_{W},\cup_{W},\vdash_{W})$, where, again, $0_{W}$ is the empty set, $\star_{W}$ is the complement operation with respect to $W$ and $\cup_{W}$ is the union operation, is a weak extended contact algebra.
With the following proposition, one can say that these weak extended contact algebras are typical examples of weak extended contact algebras as well.
\begin{proposition}\label{proposition:relational:representations:of:extended:contact:algebras}
Let $({\mathcal R},0_{{\mathcal R}},\star_{{\mathcal R}},\cup_{{\mathcal R}},\vdash_{{\mathcal R}})$ be a weak extended contact algebra.
There exists a parametrized frame $(W,R)$ and an embedding of $({\mathcal R},0_{{\mathcal R}},\star_{{\mathcal R}},\cup_{{\mathcal R}},\vdash_{{\mathcal R}})$ in $({\mathcal P}(W),0_{W},\star_{W},\cup_{W},\vdash_{W})$.
\end{proposition}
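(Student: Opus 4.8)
The plan is to let $W$ be the set of all ultrafilters of the Boolean algebra $({\mathcal R},0_{{\mathcal R}},\star_{{\mathcal R}},\cup_{{\mathcal R}})$ and to take for $h$ the Stone map $h(a)=\{u\in W:\ a\in u\}$. Since ${\mathcal R}$ is non-degenerate, $h$ is a Boolean embedding of $({\mathcal R},0_{{\mathcal R}},\star_{{\mathcal R}},\cup_{{\mathcal R}})$ into $({\mathcal P}(W),0_{W},\star_{W},\cup_{W})$, so it remains only to equip $W$ with a parametrization $R$ making $h$ respect the covering relation. Guided by the shape of $\vdash_{W}$, I would define, for every $U\subseteq W$ and all $s,t\in W$: if $U=h(g)$ for some (necessarily unique, by injectivity of $h$) $g\in{\mathcal R}$, then $R(U)(s,t)$ holds iff $(a,b)\not\vdash_{{\mathcal R}}g$ for all $a\in s$ and all $b\in t$; and if $U$ is not of the form $h(g)$, then $R(U)(s,t)$ never holds. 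The whole proof then reduces to showing that $(a,b)\vdash_{{\mathcal R}}f$ iff $(h(a),h(b))\vdash_{W}h(f)$, for all $a,b,f\in{\mathcal R}$.

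Because $R(U)$ is empty whenever $U$ is not definable, and because $h(f)\subseteq h(g)$ iff $f\leq_{{\mathcal R}}g$, unfolding the definition of $\vdash_{W}$ shows that $(h(a),h(b))\vdash_{W}h(f)$ holds iff for every ultrafilter $s\ni a$, every ultrafilter $t\ni b$ and every $g\geq_{{\mathcal R}}f$ there are $a'\in s$ and $b'\in t$ with $(a',b')\vdash_{{\mathcal R}}g$. The left-to-right implication is the easy one: assuming $(a,b)\vdash_{{\mathcal R}}f$, for any such $s,t,g$ one takes $a'=a$ and $b'=b$ and applies $(WECA_{4})$ to pass from $(a,b)\vdash_{{\mathcal R}}f$ and $f\leq_{{\mathcal R}}g$ to $(a,b)\vdash_{{\mathcal R}}g$.

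The content of the proposition lies in the converse, which I would isolate as the following extension lemma: if $(a,b)\not\vdash_{{\mathcal R}}f$, then there exist ultrafilters $s,t$ with $a\in s$, $b\in t$ and $(a',b')\not\vdash_{{\mathcal R}}f$ for all $a'\in s$, $b'\in t$; taking $U=h(f)$ then witnesses $(h(a),h(b))\not\vdash_{W}h(f)$. Call a pair $(F,G)$ of filters \emph{admissible} if $a\in F$, $b\in G$ and $(x,y)\not\vdash_{{\mathcal R}}f$ for all $x\in F$, $y\in G$. The principal filters generated by $a$ and $b$ form an admissible pair, for by $(WECA_{1})$ any $(x,y)\vdash_{{\mathcal R}}f$ with $x\geq_{{\mathcal R}}a$ and $y\geq_{{\mathcal R}}b$ would give $(a,b)\vdash_{{\mathcal R}}f$. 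Note also that $(WECA_{2})$ forces every admissible pair to consist of proper filters, since $0_{{\mathcal R}}\in F$ would yield $(0_{{\mathcal R}},y)\vdash_{{\mathcal R}}f$. Ordering admissible pairs by componentwise inclusion and checking that unions along chains remain admissible, Zorn's Lemma yields a maximal admissible pair $(s,t)$.

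The main obstacle is to show that both $s$ and $t$ are ultrafilters. Suppose $s$ is not, and pick $c$ with $c\notin s$ and $c^{\star_{{\mathcal R}}}\notin s$. Maximality means neither $c$ nor $c^{\star_{{\mathcal R}}}$ can be added to $s$; using the downward monotonicity coming from $(WECA_{1})$ to reduce an arbitrary witness in the generated filter to a meet, and then intersecting the two witnessing pairs, one obtains $x\in s$ and $y\in t$ with both $(x\cap_{{\mathcal R}}c,y)\vdash_{{\mathcal R}}f$ and $(x\cap_{{\mathcal R}}c^{\star_{{\mathcal R}}},y)\vdash_{{\mathcal R}}f$. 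Then item~$(3)$ of Proposition~\ref{proposition:consequences:definition:eca} gives $((x\cap_{{\mathcal R}}c)\cup_{{\mathcal R}}(x\cap_{{\mathcal R}}c^{\star_{{\mathcal R}}}),y)\vdash_{{\mathcal R}}f$, that is $(x,y)\vdash_{{\mathcal R}}f$, contradicting admissibility of $(s,t)$. The argument for $t$ is symmetric, using item~$(4)$ of Proposition~\ref{proposition:consequences:definition:eca} instead. Hence $(s,t)$ is the required pair of ultrafilters, which completes the converse and the proof. I expect the bookkeeping around ``adding $c$ to a filter'' — justifying that a failure to extend admissibly reduces to the single inequality $(x\cap_{{\mathcal R}}c,y)\vdash_{{\mathcal R}}f$ — to be the only genuinely delicate point.
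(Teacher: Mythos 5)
Your proposal is correct, and it shares the paper's skeleton --- $W$ is the set of ultrafilters, $h$ is the Stone map, and everything reduces to the extension lemma: if $(a,b)\not\vdash_{{\mathcal R}}f$, find ultrafilters $s\ni a$, $t\ni b$ such that $(a^{\prime},b^{\prime})\not\vdash_{{\mathcal R}}f$ for all $a^{\prime}\in s$, $b^{\prime}\in t$ --- but it differs from the paper in two genuine respects. First, your parametrization is leaner: $R(U)$ is empty unless $U=h(g)$, whereas the paper defines $R(U)(s,t)$ uniformly for every $U\subseteq W$ (for all $a,b,d\in{\mathcal R}$, if $a\in s$, $b\in t$ and $(a,b)\vdash_{{\mathcal R}}d$, then some $e$ with $d\not\leq_{{\mathcal R}}e$ belongs to every member of $U$). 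Your choice collapses $(h(a),h(b))\vdash_{W}h(f)$ to a condition quantified over $g\geq_{{\mathcal R}}f$ and makes the easy direction a one-line use of $(WECA_{4})$, at the price of a frame tailored to the embedding rather than defined intrinsically from the algebra. Second, the construction of $(s,t)$ is different. The paper proceeds sequentially: it shows the sets $u^{l}$ and $v^{r}$ are ideals (Claim~\ref{claim:about:ul:and:vr}), extends the principal filter $t_{b}$ to a maximal filter $t$ disjoint from $s_{a}^{l}$, and then extends $s_{a}$ to a maximal filter $s$ disjoint from $t^{r}$, invoking the filter/ideal separation theorem as a black box. You instead run a single Zorn argument on pairs of filters ordered componentwise and prove that a maximal admissible pair consists of ultrafilters, splitting $x=(x\cap_{{\mathcal R}}c)\cup_{{\mathcal R}}(x\cap_{{\mathcal R}}c^{\star_{{\mathcal R}}})$ via items~(3) and~(4) of Proposition~\ref{proposition:consequences:definition:eca} --- the same consequences of $(WECA_{1})$--$(WECA_{4})$ that the paper uses to prove Claim~\ref{claim:about:ul:and:vr}. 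The paper's route buys reuse of standard separation machinery and an $h$-independent frame; yours buys a simpler frame, a symmetric treatment of the two coordinates, and a self-contained maximality argument. The point you flagged as delicate is indeed fine: a witness in the filter generated by $s\cup\{c\}$ lies above some $x\cap_{{\mathcal R}}c$ with $x\in s$, so $(WECA_{1})$ pulls it down to $(x\cap_{{\mathcal R}}c,y)\vdash_{{\mathcal R}}f$; and even if that generated filter is improper, $(WECA_{2})$ makes the enlarged pair inadmissible anyway, so the case analysis closes.
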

\begin{proof}
Let $(W,R)$ be the structure where
\begin{itemize}
\item $W$ is the set of all maximal filters in the Boolean algebra $({\mathcal R},0_{{\mathcal R}},\star_{{\mathcal R}},\cup_{{\mathcal R}})$,
\item $R$ is the function associating to each subset $U$ of $W$ the binary relation $R(U)$ on $W$ defined by $R(U)(s,t)$ iff for all $a,b,d\in{\mathcal R}$, if $a\in s$, $b\in t$ and $(a,b)\vdash_{{\mathcal R}}d$ then there exists $e\in{\mathcal R}$ such that $d\not\leq_{{\mathcal R}}e$ and for all $u\in U$, $e\in u$.
\end{itemize}
Obviously, $(W,R)$ is a parametrized frame.
Let $h$ be the function associating to each region in ${\mathcal R}$ a subset of $W$ defined by
\begin{itemize}
\item $h(a)$ is the set of all $s\in W$ such that $a\in s$.
\end{itemize}
In order to prove that $h$ is an embedding of $({\mathcal R},0_{{\mathcal R}},\star_{{\mathcal R}},\cup_{{\mathcal R}},\vdash_{{\mathcal R}})$ in $({\mathcal P}(W),0_{W},\star_{W},
$\linebreak$
\cup_{W},\vdash_{W})$, let us prove the following
\begin{lemma}\label{lemma:for:the:proposition:relational:representations:of:extended:contact:algebras}
The following conditions hold:
\begin{enumerate}
\item $h$ is injective.
\item $h(0_{{\mathcal R}})=0_{W}$.
\item For all regions $a$ in ${\mathcal R}$, $h(a^{\star_{{\mathcal R}}})=h(a)^{\star_{W}}$.
\item For all regions $a,b$ in ${\mathcal R}$, $h(a\cup_{{\mathcal R}}b)=h(a)\cup_{W}h(b)$.
\item For all regions $a,b,d$ in ${\mathcal R}$, $(a,b)\vdash_{{\mathcal R}}d$ iff $(h(a),h(b))\vdash_{W}h(d)$.
\end{enumerate}
\end{lemma}
\begin{proof}
$(1)$--$(4)$~The injectivity of $h$ and the fact that $h$ preserves the operations $0$, $\star$ and $\cup$ follow from classical results in the theory of filters and ideals~\cite{Givant:Halmos:2009}.
\\
\\
$(5)$~Let $a,b,d$ be regions in ${\mathcal R}$.
We demonstrate $(a,b)\vdash_{{\mathcal R}}d$ iff $(h(a),h(b))\vdash_{W}h(d)$.
Suppose $(a,b)\vdash_{{\mathcal R}}d$ not-iff $(h(a),h(b))\vdash_{W}h(d)$.
Hence, $(a,b)\vdash_{{\mathcal R}}d$ and $(h(a),h(b))\not\vdash_{W}h(d)$, or $(h(a),h(b))\vdash_{W}h(d)$ and $(a,b)\not\vdash_{{\mathcal R}}d$.
We have to consider two cases.
\begin{itemize}
\item In the former case, let $s\in h(a)$, $t\in h(b)$ and $U\subseteq W$ be such that $h(d)\subseteq U$ and $R(U)(s,t)$.
Hence, $a\in s$ and $b\in t$.
Since $(a,b)\vdash_{{\mathcal R}}d$ and $R(U)(s,t)$, therefore let $e\in{\mathcal R}$ be such that $d\not\leq_{{\mathcal R}}e$ and for all $u\in U$, $e\in u$.
Let $v\in W$ be such that $d\in v$ and $e\not\in v$.
Thus, $v\in h(d)$.
Since $h(d)\subseteq U$, therefore $v\in U$.
Since for all $u\in U$, $e\in u$, therefore $e\in v$: a contradiction.
\item In the latter case, let $s_{a}$ be the set of all regions $a^{\prime}$ in ${\mathcal R}$ such that $a\leq_{{\mathcal R}}a^{\prime}$ and $t_{b}$ be the set of all regions $b^{\prime}$ in ${\mathcal R}$ such that $b\leq_{{\mathcal R}}b^{\prime}$.
Remark that $a\in s_{a}$ and $b\in t_{b}$.
\begin{claim}\label{claim:universal:condition:on:sa:and:tb}
For all $a^{\prime}\in s_{a}$ and for all $b^{\prime}\in t_{b}$, $(a^{\prime},b^{\prime})\not\vdash_{{\mathcal R}}d$.
\end{claim}
\begin{claim}\label{claim:sa:and:tb:are:filters}
$s_{a}$ and $t_{b}$ are filters in the Boolean algebra $({\mathcal R},0_{{\mathcal R}},\star_{{\mathcal R}},\cup_{{\mathcal R}})$.
\end{claim}
For all filters $u,v$ in the Boolean algebra $({\mathcal R},0_{{\mathcal R}},\star_{{\mathcal R}},\cup_{{\mathcal R}})$, let $u^{l}$ be the set of all regions $b^{\prime}$ in ${\mathcal R}$ such that there exists $a^{\prime}\in u$ such that $(a^{\prime},b^{\prime})\vdash_{{\mathcal R}}d$ and $v^{r}$ be the set of all regions $a^{\prime}$ in ${\mathcal R}$ such that there exists $b^{\prime}\in v$ such that $(a^{\prime},b^{\prime})\vdash_{{\mathcal R}}d$.
\begin{claim}\label{claim:about:ul:and:vr}
For all filters $u,v$ in the Boolean algebra $({\mathcal R},0_{{\mathcal R}},\star_{{\mathcal R}},\cup_{{\mathcal R}})$, $u^{l}$ and $v^{r}$ are ideals in the Boolean algebra $({\mathcal R},0_{{\mathcal R}},\star_{{\mathcal R}},\cup_{{\mathcal R}})$.
\end{claim}
\begin{claim}\label{claim:about:three:equivalent:conditions}
For all filters $u,v$ in the Boolean algebra $({\mathcal R},0_{{\mathcal R}},\star_{{\mathcal R}},\cup_{{\mathcal R}})$, the following conditions are equivalent:
\begin{enumerate}
\item For all $a^{\prime}\in u$ and for all $b^{\prime}\in v$, $(a^{\prime},b^{\prime})\not\vdash_{{\mathcal R}}d$.
\item $u^{l}\cap v=\emptyset$.
\item $u\cap v^{r}=\emptyset$.
\end{enumerate}
\end{claim}
By Claims~\ref{claim:universal:condition:on:sa:and:tb}--\ref{claim:about:three:equivalent:conditions}, $s_{a}^{l}$ is an ideal in the Boolean algebra $({\mathcal R},0_{{\mathcal R}},\star_{{\mathcal R}},\cup_{{\mathcal R}})$, $t_{b}$ is a filter in the Boolean algebra $({\mathcal R},0_{{\mathcal R}},\star_{{\mathcal R}},\cup_{{\mathcal R}})$ and $s_{a}^{l}\cap t_{b}=\emptyset$.
By classical results in the theory of filters and ideals~\cite{Givant:Halmos:2009}, let $t$ be a maximal filter in the Boolean algebra $({\mathcal R},0_{{\mathcal R}},\star_{{\mathcal R}},\cup_{{\mathcal R}})$ such that $t_{b}\subseteq t$ and $s_{a}^{l}\cap t=\emptyset$.
Since $b\in t_{b}$, therefore $b\in t$ and $t\in h(b)$.
Moreover, by Claims~\ref{claim:sa:and:tb:are:filters}--\ref{claim:about:three:equivalent:conditions}, $s_{a}$ is a filter in the Boolean algebra $({\mathcal R},0_{{\mathcal R}},\star_{{\mathcal R}},\cup_{{\mathcal R}})$, $t^{r}$ is an ideal in the Boolean algebra $({\mathcal R},0_{{\mathcal R}},\star_{{\mathcal R}},\cup_{{\mathcal R}})$ and $s_{a}\cap t^{r}=\emptyset$.
By classical results in the theory of filters and ideals~\cite{Givant:Halmos:2009}, let $s$ be a maximal filter in the Boolean algebra $({\mathcal R},0_{{\mathcal R}},\star_{{\mathcal R}},\cup_{{\mathcal R}})$ such that $s_{a}\subseteq s$ and $s^{l}\cap t=\emptyset$.
Since $a\in s_{a}$, therefore $a\in s$ and $s\in h(a)$.
Moreover, since $t$ is a maximal filter in the Boolean algebra $({\mathcal R},0_{{\mathcal R}},\star_{{\mathcal R}},\cup_{{\mathcal R}})$, therefore by Claim~\ref{claim:about:three:equivalent:conditions}, for all $a^{\prime}\in s$ and for all $b^{\prime}\in t$, $(a^{\prime},b^{\prime})\not\vdash_{{\mathcal R}}d$.
Since $(h(a),h(b))\vdash_{W}h(d)$ and $t\in h(b)$, therefore not-$R(h(d))(s,t)$.
Let $a^{\prime\prime},b^{\prime\prime},d^{\prime\prime}\in{\mathcal R}$ be such that $a^{\prime\prime}\in s$, $b^{\prime\prime}\in t$, $(a^{\prime\prime},b^{\prime\prime})\vdash_{{\mathcal R}}d^{\prime\prime}$ and for all $e\in{\mathcal R}$, $d^{\prime\prime}\leq_{{\mathcal R}}e$, or there exists $u\in h(d)$ such that $e\not\in u$.
Hence, $d^{\prime\prime}\leq_{{\mathcal R}}d$, or there exists $u\in h(d)$ such that $d\not\in u$.
Since for all $u\in h(d)$, $d\in u$, therefore $d^{\prime\prime}\leq_{{\mathcal R}}d$.
Since $(a^{\prime\prime},b^{\prime\prime})\vdash_{{\mathcal R}}d^{\prime\prime}$, therefore $(a^{\prime\prime},b^{\prime\prime})\vdash_{{\mathcal R}}d$.
Since for all $a^{\prime}\in s$ and for all $b^{\prime}\in t$, $(a^{\prime},b^{\prime})\not\vdash_{{\mathcal R}}d$, therefore $a^{\prime\prime}\not\in s$, or $b^{\prime\prime}\not\in t$: a contradiction.
\end{itemize}
Hence, $(a,b)\vdash_{{\mathcal R}}d$ iff $(h(a),h(b))\vdash_{W}h(d)$.
\end{proof}
\\
\\
This completes the proof of Proposition~\ref{proposition:relational:representations:of:extended:contact:algebras}.
\end{proof}
\\
\\
The main drawback of the kind of extended contact algebra $({\mathcal P}(W),0_{W},\star_{W},\cup_{W},
$\linebreak$
\vdash_{W})$ considered in Proposition~\ref{proposition:relational:representations:of:extended:contact:algebras} is that the parametrized frame $(W,R)$ it is based on is a relatively complex relational structure.
In the next sections, we introduce two other kinds of extended contact algebra based on equivalence relations.
\section{Equivalence frames of type $1$}\label{subsection:Equivalence:frames:of:type:1}
An {\it equivalence frame of type $1$} is a structure of the form $(W,R)$ where $W$ is a nonempty set and $R$ is an equivalence relation on $W$.
In an equivalence frame $(W,R)$ of type $1$, the {\it equivalence class of $s\in W$ modulo $R$} will be denoted $R(s)$.
Given an equivalence frame $(W,R)$ of type $1$, let $\vdash_{W}$ be the ternary relation on $W$'s powerset defined by
\begin{itemize}
\item $(A,B)\vdash_{W}D$ iff the intersection of $A$ and $B$ is included in $D$ and for all $s\in W$, if $R(s)$ intersects both $A$ and $B$ then $R(s)$ intersects $D$.
\end{itemize}
The reader may easily verify that the structure $({\mathcal P}(W),0_{W},\star_{W},\cup_{W},\vdash_{W})$ is an extended contact algebra.
With the following proposition, one can say that these extended contact algebras are typical examples of extended contact algebras.
\begin{proposition}\label{proposition:relational:representations:of:type:1:of:finite:extended:contact:algebras}
Let $({\mathcal R},0_{{\mathcal R}},\star_{{\mathcal R}},\cup_{{\mathcal R}},\vdash_{{\mathcal R}})$ be a finite extended contact algebra.
There exists a finite equivalence frame $(W,R)$ of type $1$ and an embedding of $({\mathcal R},0_{{\mathcal R}},\star_{{\mathcal R}},\cup_{{\mathcal R}},
$\linebreak$
\vdash_{{\mathcal R}})$ in $({\mathcal P}(W),0_{W},\star_{W},\cup_{W},\vdash_{W})$.
\end{proposition}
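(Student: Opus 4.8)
The plan is to erect the desired equivalence frame on top of the finite topological representation supplied by Proposition~\ref{proposition:topological:representations:of:extended:contact:algebras}. Since $({\mathcal R},0_{{\mathcal R}},\star_{{\mathcal R}},\cup_{{\mathcal R}},\vdash_{{\mathcal R}})$ is finite, that proposition gives a finite topological space $(X,\tau)$ together with an embedding $e$ of $({\mathcal R},0_{{\mathcal R}},\star_{{\mathcal R}},\cup_{{\mathcal R}},\vdash_{{\mathcal R}})$ into $(RC(X,\tau),0_{X},\star_{X},\cup_{X},\vdash_{X})$. Writing $\mathrm{At}$ for the finite nonempty set of atoms of the Boolean algebra $({\mathcal R},0_{{\mathcal R}},\star_{{\mathcal R}},\cup_{{\mathcal R}})$, I would resolve each point of $X$ into its incidences with atoms by setting
$$W=\{(x,p):\ x\in X,\ p\in\mathrm{At}\ \text{and}\ x\in e(p)\},$$
letting $R$ be the equivalence relation on $W$ given by $(x,p)R(y,q)$ iff $x=y$, and letting $h(a)=\{(x,p)\in W:\ p\leq_{{\mathcal R}}a\}$ for every region $a$. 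Then $W$ is finite and nonempty (as ${\mathcal R}$ is non-degenerate it has an atom $p$, and $e(p)\not=\emptyset$ because $e$ is injective and $p\not=0_{{\mathcal R}}$), and $(W,R)$ is a finite equivalence frame of type $1$ whose equivalence classes are precisely the fibres $R(x,p)=\{(x,q):\ (x,q)\in W\}$ lying above the points $x\in X$.

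That $h$ is a Boolean embedding would follow immediately from the atomicity of the finite Boolean algebra ${\mathcal R}$. For an atom $p$ one has $p\leq_{{\mathcal R}}a^{\star_{{\mathcal R}}}$ iff $p\not\leq_{{\mathcal R}}a$, and $p\leq_{{\mathcal R}}a\cup_{{\mathcal R}}b$ iff $p\leq_{{\mathcal R}}a$ or $p\leq_{{\mathcal R}}b$; these two facts yield $h(0_{{\mathcal R}})=0_{W}$, $h(a^{\star_{{\mathcal R}}})=h(a)^{\star_{W}}$ and $h(a\cup_{{\mathcal R}}b)=h(a)\cup_{W}h(b)$ at once. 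Injectivity is equally direct: if $a\not=b$, then some atom $p$ lies below exactly one of $a,b$, and, $e(p)$ being nonempty, $W$ contains a pair $(x,p)$ witnessing $h(a)\not=h(b)$.

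The heart of the argument is the verification that $(a,b)\vdash_{{\mathcal R}}d$ iff $(h(a),h(b))\vdash_{W}h(d)$ for all regions $a,b,d$. The key observation is that an equivalence class $R(x,p)$ meets $h(a)$ iff $x\in e(p^{\prime})$ for some atom $p^{\prime}\leq_{{\mathcal R}}a$, that is, iff $x\in\bigcup\{e(p^{\prime}):\ p^{\prime}\leq_{{\mathcal R}}a\}=e(a)$ (here I use that $e$ preserves joins and that $a$ is the join of the atoms below it). Consequently the equivalence-class clause in the definition of $\vdash_{W}$ asserts exactly that for every $x\in X$, if $x\in e(a)$ and $x\in e(b)$ then $x\in e(d)$, i.e. $e(a)\cap e(b)\subseteq e(d)$; and this is precisely the condition $(e(a),e(b))\vdash_{X}e(d)$, which, $e$ being an embedding, is equivalent to $(a,b)\vdash_{{\mathcal R}}d$. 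It then remains only to dispose of the first clause of $\vdash_{W}$, namely $h(a)\cap h(b)\subseteq h(d)$: since $h(a)\cap h(b)=h(a\cap_{{\mathcal R}}b)$ and $h$ reflects the Boolean order, this clause amounts to $a\cap_{{\mathcal R}}b\leq_{{\mathcal R}}d$, which by $(ECA_{4})$ is already implied by $(a,b)\vdash_{{\mathcal R}}d$ and hence is subsumed by the equivalence-class clause. Combining the two clauses gives the required equivalence.

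I do not expect a genuine obstacle in this route; the only points requiring care are the bookkeeping showing that $e$ preserves joins (so that ``a class meets $h(a)$'' collapses to the single set-membership $x\in e(a)$) and the remark, via $(ECA_{4})$, that the set-theoretic intersection clause of $\vdash_{W}$ is automatically absorbed by the equivalence-class clause. The conceptual content is entirely in the construction itself, namely in passing from the finite topological representation to an equivalence frame by taking point–atom incidences as worlds and the topological points as equivalence classes.
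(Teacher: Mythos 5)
Your proposal is correct and takes essentially the same route as the paper: both reduce to the finite topological representation of Proposition~\ref{proposition:topological:representations:of:extended:contact:algebras} and build the frame from point--atom incidence pairs, with ``same point'' as the equivalence relation and with a region $a$ mapped to the set of pairs whose atom lies below $a$ (the paper's worlds $(A,s)$, where $A$ is an atom of $RC(X,\tau)$ and $s\in A$, correspond bijectively to your pairs $(x,p)$ because the topological embedding is surjective, hence an isomorphism onto $RC(X,\tau)$). The only difference is presentational: you verify preservation of $\vdash$ via the single observation that a class $R(x,p)$ meets $h(a)$ iff $x\in e(a)$, absorbing the intersection clause through $(ECA_{4})$, whereas the paper reaches the same conclusion by a longer case-by-case contradiction argument that decomposes $h(a)$, $h(b)$, $h(d)$ into atoms of $RC(X,\tau)$.
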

\begin{proof}
By Proposition~\ref{proposition:topological:representations:of:extended:contact:algebras}, let $(X,\tau)$ be a finite topological space and $h$ be a surjective embedding of $({\mathcal R},0_{{\mathcal R}},\star_{{\mathcal R}},\cup_{{\mathcal R}},\vdash_{{\mathcal R}})$ in $(RC(X,\tau),0_{X},\star_{X},\cup_{X},\vdash_{X})$.
Hence, the Boolean algebra $(RC(X,\tau),0_{X},\star_{X},\cup_{X})$ of all regular closed subsets of $X$ is finite.
Let $(W,R)$ be the structure where
\begin{itemize}
\item $W$ is the set of all couples of the form $(A,s)$ where $A\in RC(X,\tau)$ and $s\in X$ are such that $A$ is an atom of $(RC(X,\tau),0_{X},\star_{X},\cup_{X})$ and $s\in A$,
\item $R$ is the binary relation on $W$ defined by $R((A,s),(B,t))$ iff $s=t$.
\end{itemize}
Obviously, $(W,R)$ is an equivalence frame of type $1$.
Let $h^{\prime}$ be the function associating to each region in ${\mathcal R}$ a subset of $W$ defined by
\begin{itemize}
\item $h^{\prime}(a)$ is the set of all $(A,s)\in W$ such that $A\subseteq h(a)$.
\end{itemize}
In order to prove that $h^{\prime}$ is an embedding of $({\mathcal R},0_{{\mathcal R}},\star_{{\mathcal R}},\cup_{{\mathcal R}},\vdash_{{\mathcal R}})$ in $({\mathcal P}(W),0_{W},\star_{W},
$\linebreak$
\cup_{W},\vdash_{W})$, let us prove the following
\begin{lemma}\label{lemma:for:the:proposition:relational:representations:of:type:1:of:finite:extended:contact:algebras}
The following conditions hold:
\begin{enumerate}
\item $h^{\prime}$ is injective.
\item $h^{\prime}(0_{{\mathcal R}})=0_{W}$.
\item For all regions $a$ in ${\mathcal R}$, $h^{\prime}(a^{\star_{{\mathcal R}}})=h^{\prime}(a)^{\star_{W}}$.
\item For all regions $a,b$ in ${\mathcal R}$, $h^{\prime}(a\cup_{{\mathcal R}}b)=h^{\prime}(a)\cup_{W}h^{\prime}(b)$.
\item For all regions $a,b,d$ in ${\mathcal R}$, $(a,b)\vdash_{{\mathcal R}}d$ iff $(h^{\prime}(a),h^{\prime}(b))\vdash_{W}h^{\prime}(d)$.
\end{enumerate}
\end{lemma}
\begin{proof}
$(1)$~We demonstrate $h^{\prime}$ is injective.
Let $a,b$ be arbitrary distinct regions in ${\mathcal R}$.
Since $h$ is a surjective embedding of $({\mathcal R},0_{{\mathcal R}},\star_{{\mathcal R}},\cup_{{\mathcal R}},\vdash_{{\mathcal R}})$ in $(RC(X,\tau),0_{X},\star_{X},\cup_{X},
$\linebreak$
\vdash_{X})$, therefore $h(a)$ and $h(b)$ are distinct regular closed subsets of $X$.
Hence, $h(a)\not\subseteq h(b)$, or $h(b)\not\subseteq h(a)$.
Without loss of generality, suppose $h(a)\not\subseteq h(b)$.
Let $s\in X$ be such that $s\in h(a)$ and $s\not\in h(b)$.
Since the Boolean algebra $(RC(X,\tau),0_{X},\star_{X},\cup_{X})$ of all regular closed subsets of $X$ is finite, therefore let $n$ be a nonnegative integer and $A_{1},\ldots,A_{n}$ be atoms of $(RC(X,\tau),0_{X},\star_{X},\cup_{X})$ such that $h(a)=A_{1}\cup_{X}\ldots\cup_{X}A_{n}$.
Since $s\in h(a)$, therefore let $i\leq n$ be a positive integer such that $s\in A_{i}$.
Since $A_{i}$ is an atom of $(RC(X,\tau),0_{X},\star_{X},\cup_{X})$, therefore the couple $(A_{i},s)$ is in $W$.
Since $A_{i}\subseteq h(a)$, therefore $(A_{i},s)\in h^{\prime}(a)$.
Since $s\not\in h(b)$ and $s\in A_{i}$, therefore $A_{i}\not\subseteq h(b)$.
Thus, $(A_{i},s)\not\in h^{\prime}(b)$.
Since $(A_{i},s)\in h^{\prime}(a)$, therefore $h^{\prime}(a)\not\subseteq h^{\prime}(b)$.
Consequently, $h^{\prime}(a)$ and $h^{\prime}(b)$ are distinct subsets of $W$.
Since $a,b$ were arbitrary, therefore $h^{\prime}$ is injective.
\\
\\
$(2)$~We demonstrate $h^{\prime}(0_{{\mathcal R}})=0_{W}$.
Suppose $h^{\prime}(0_{{\mathcal R}})\not=0_{W}$.
Let $(A,s)$ be a couple in $W$ such that $(A,s)\in h^{\prime}(0_{{\mathcal R}})$.
Hence, $A$ is an atom of $(RC(X,\tau),0_{X},\star_{X},\cup_{X})$.
Moreover, $A\subseteq h(0_{{\mathcal R}})$.
Since $h$ is a surjective embedding of $({\mathcal R},0_{{\mathcal R}},\star_{{\mathcal R}},\cup_{{\mathcal R}},\vdash_{{\mathcal R}})$ in $(RC(X,\tau),0_{X},\star_{X},\cup_{X},\vdash_{X})$, therefore $h(0_{{\mathcal R}})=0_{X}$.
Since $A\subseteq h(0_{{\mathcal R}})$, therefore $A\subseteq0_{X}$.
Thus, $A$ is not an atom: a contradiction.
Consequently, $h^{\prime}(0_{{\mathcal R}})=0_{W}$.
\\
\\
$(3)$~Let $a$ be a region in ${\mathcal R}$
We demonstrate $h^{\prime}(a^{\star_{{\mathcal R}}})=h^{\prime}(a)^{\star_{W}}$.
Suppose $h^{\prime}(a^{\star_{{\mathcal R}}})\not=h^{\prime}(a)^{\star_{W}}$.
Hence, $h^{\prime}(a^{\star_{{\mathcal R}}})\not\subseteq h^{\prime}(a)^{\star_{W}}$, or $h^{\prime}(a)^{\star_{W}}\not\subseteq h^{\prime}(a^{\star_{{\mathcal R}}})$.
We have to consider two cases.
\begin{itemize}
\item In the former case, let $(A,s)$ be a couple in $W$ such that $(A,s)\in h^{\prime}(a^{\star_{{\mathcal R}}})$ and $(A,s)\not\in h^{\prime}(a)^{\star_{W}}$.
Thus, $A$ is an atom of $(RC(X,\tau),0_{X},\star_{X},\cup_{X})$.
Moreover, $A\subseteq h(a^{\star_{{\mathcal R}}})$.
Since $h$ is a surjective embedding of $({\mathcal R},0_{{\mathcal R}},\star_{{\mathcal R}},\cup_{{\mathcal R}},\vdash_{{\mathcal R}})$ in $(RC(X,\tau),0_{X},\star_{X},\cup_{X},\vdash_{X})$, therefore $h(a^{\star_{{\mathcal R}}})=h(a)^{\star_{X}}$.
Since $A\subseteq h(a^{\star_{{\mathcal R}}})$, therefore $A\subseteq h(a)^{\star_{X}}$.
Since $(A,s)\not\in h^{\prime}(a)^{\star_{W}}$, therefore $(A,s)\in h^{\prime}(a)$.
Consequently, $A\subseteq h(a)$.
Since $A\subseteq h(a)^{\star_{X}}$, therefore $A$ is not an atom: a contradiction.
\item In the latter case, let $(A,s)$ be a couple in $W$ such that $(A,s)\in h^{\prime}(a)^{\star_{W}}$ and $(A,s)\not\in h^{\prime}(a^{\star_{{\mathcal R}}})$.
Hence, $A$ is an atom of $(RC(X,\tau),0_{X},\star_{X},\cup_{X})$.
Moreover, $(A,s)\not\in h^{\prime}(a)$.
Thus, $A\not\subseteq h(a)$.
Since $(A,s)\not\in h^{\prime}(a^{\star_{{\mathcal R}}})$, therefore $A\not\subseteq h(a^{\star_{{\mathcal R}}})$.
Since $h$ is a surjective embedding of $({\mathcal R},0_{{\mathcal R}},\star_{{\mathcal R}},\cup_{{\mathcal R}},\vdash_{{\mathcal R}})$ in $(RC(X,\tau),0_{X},\star_{X},\cup_{X},\vdash_{X})$, therefore $h(a^{\star_{{\mathcal R}}})=h(a)^{\star_{X}}$.
Since $A\not\subseteq h(a^{\star_{{\mathcal R}}})$, therefore $A\not\subseteq h(a)^{\star_{X}}$.
Since $A\not\subseteq h(a)$, therefore $A$ is not an atom: a contradiction.
\end{itemize}
Consequently, $h^{\prime}(a^{\star_{{\mathcal R}}})=h^{\prime}(a)^{\star_{W}}$.
\\
\\
$(4)$~Let $a,b$ be regions in ${\mathcal R}$.
We demonstrate $h^{\prime}(a\cup_{{\mathcal R}}b)=h^{\prime}(a)\cup_{W}h^{\prime}(b)$.
Suppose $h^{\prime}(a\cup_{{\mathcal R}}b)\not=h^{\prime}(a)\cup_{W}h^{\prime}(b)$.
Hence, $h^{\prime}(a\cup_{{\mathcal R}}b)\not\subseteq h^{\prime}(a)\cup_{W}h^{\prime}(b)$, or $h^{\prime}(a)\cup_{W}h^{\prime}(b)\not\subseteq h^{\prime}(a\cup_{{\mathcal R}}b)$.
We have to consider two cases.
\begin{itemize}
\item In the former case, let $(A,s)$ be a couple in $W$such that $(A,s)\in h^{\prime}(a\cup_{{\mathcal R}}b)$ and $(A,s)\not\in h^{\prime}(a)\cup_{W}h^{\prime}(b)$.
Thus, $A$ is an atom of $(RC(X,\tau),0_{X},\star_{X},\cup_{X})$.
Moreover, $A\subseteq h(a\cup_{{\mathcal R}}b)$.
Since $h$ is a surjective embedding of $({\mathcal R},0_{{\mathcal R}},\star_{{\mathcal R}},\cup_{{\mathcal R}},\vdash_{{\mathcal R}})$ in $(RC(X,\tau),0_{X},\star_{X},\cup_{X},\vdash_{X})$, therefore $h(a\cup_{{\mathcal R}}b)=h(a)\cup_{X}h(b)$.
Since $A\subseteq h(a\cup_{{\mathcal R}}b)$, therefore $A\subseteq h(a)\cup_{X}h(b)$.
Since $(A,s)\not\in h^{\prime}(a)\cup_{W}h^{\prime}(b)$, therefore $(A,s)\not\in h^{\prime}(a)$ and $(A,s)\not\in h^{\prime}(b)$.
Consequently, $A\not\subseteq h(a)$ and $A\not\subseteq h(b)$.
Since $A\subseteq h(a)\cup_{X}h(b)$, therefore $A$ is not an atom: a contradiction.
\item In the latter case, let $(A,s)$ be a couple in $W$such that $(A,s)\in h^{\prime}(a)\cup_{W}h^{\prime}(b)$ and $(A,s)\not\in h^{\prime}(a\cup_{{\mathcal R}}b)$.
Hence, $A$ is an atom of $(RC(X,\tau),0_{X},\star_{X},\cup_{X})$.
Moreover, $(A,s)\in h^{\prime}(a)$, or $(A,s)\in h^{\prime}(b)$.
Thus, $(A,s)\in h^{\prime}(a)$, or $(A,s)\in h^{\prime}(b)$.
Consequently, $A\subseteq h(a)$, or $A\subseteq h(b)$.
Hence, $A\subseteq h(a)\cup_{X}h(b)$.
Since $(A,s)\not\in h^{\prime}(a\cup_{{\mathcal R}}b)$, therefore $A\not\subseteq h(a\cup_{{\mathcal R}}b)$.
Since $h$ is a surjective embedding of $({\mathcal R},0_{{\mathcal R}},\star_{{\mathcal R}},\cup_{{\mathcal R}},\vdash_{{\mathcal R}})$ in $(RC(X,\tau),0_{X},\star_{X},\cup_{X},\vdash_{X})$, therefore $h(a\cup_{{\mathcal R}}b)=h(a)\cup_{X}h(b)$.
Since $A\not\subseteq h(a\cup_{{\mathcal R}}b)$, therefore $A\not\subseteq h(a)\cup_{X}h(b)$: a contradiction.
\end{itemize}
Thus, $h^{\prime}(a\cup_{{\mathcal R}}b)=h^{\prime}(a)\cup_{W}h^{\prime}(b)$.
\\
\\
$(5)$~Let $a,b,d$ be regions in ${\mathcal R}$.
We demonstrate $(a,b)\vdash_{{\mathcal R}}d$ iff $(h^{\prime}(a),h^{\prime}(b))\vdash_{W}h^{\prime}(d)$.
Suppose $(a,b)\vdash_{{\mathcal R}}d$ not-iff $(h^{\prime}(a),h^{\prime}(b))\vdash_{W}h^{\prime}(d)$.
Hence, $(a,b)\vdash_{{\mathcal R}}d$ and $(h^{\prime}(a),h^{\prime}(b))\not\vdash_{W}h^{\prime}(d)$, or $(h^{\prime}(a),h^{\prime}(b))\vdash_{W}h^{\prime}(d)$ and $(a,b)\not\vdash_{{\mathcal R}}d$.
We have to consider two cases.
\begin{itemize}
\item In the former case, since $h$ is a surjective embedding of $({\mathcal R},0_{{\mathcal R}},\star_{{\mathcal R}},\cup_{{\mathcal R}},\vdash_{{\mathcal R}})$ in $(RC(X,\tau),0_{X},\star_{X},\cup_{X},\vdash_{X})$, therefore $(h(a),h(b))\vdash_{X}h(d)$.
Thus, $h(a)\cap h(b)\subseteq h(d)$.
Since $(h^{\prime}(a),h^{\prime}(b))\not\vdash_{W}h^{\prime}(d)$, therefore $h^{\prime}(a)\cap h^{\prime}(b)\not\subseteq h^{\prime}(d)$, or there exists a couple $(E,w)$ in $W$ such that $R((E,w))\cap h^{\prime}(a)\not=\emptyset$, $R((E,w))\cap h^{\prime}(b)\not=\emptyset$ and $R((E,w))\cap h^{\prime}(d)=\emptyset$.
We have to consider two subcases.
\begin{itemize}
\item In the former subcase, let $(E,w)$ be a couple in $W$ such that $(E,w)\in h^{\prime}(a)$, $(E,w)\in h^{\prime}(b)$ and $(E,w)\not\in h^{\prime}(d)$.
Consequently, $E$ is an atom of $(RC(X,\tau),0_{X},\star_{X},\cup_{X})$.
Moreover, $E\subseteq h(a)$, $E\subseteq h(b)$ and $E\not\subseteq h(d)$.
Hence, $E\subseteq h(a)\cap h(b)$.
Since $h(a)\cap h(b)\subseteq h(d)$, therefore $E\subseteq h(d)$: a contradiction.
\item In the latter subcase, let $(E,w)$ be a couple in $W$ such that $R((E,w))\cap h^{\prime}(a)\not=\emptyset$, $R((E,w))\cap h^{\prime}(b)\not=\emptyset$ and $R((E,w))\cap h^{\prime}(d)=\emptyset$.
Let $(A,s)$ and $(B,t)$ be couples in $W$ such that $R((E,w),(A,s))$, $(A,s)\in h^{\prime}(a)$, $R((E,w),(B,t))$ and $(B,t)\in h^{\prime}(b)$.
Thus, $A$ and $B$ are atoms of $(RC(X,\tau),0_{X},\star_{X},\cup_{X})$ such that $s\in A$ and $t\in B$.
Moreover, $w=s$, $A\subseteq h(a)$, $w=t$ and $B\subseteq h(b)$.
Consequently, $w\in h(a)\cap h(b)$.
Since $h(a)\cap h(b)\subseteq h(d)$, therefore $w\in h(d)$.
Since the Boolean algebra $(RC(X,\tau),0_{X},\star_{X},\cup_{X})$ of all regular closed subsets of $X$ is finite, therefore let $n$ be a nonnegative integer and $D_{1},\ldots,D_{n}$ be atoms of $(RC(X,\tau),0_{X},\star_{X},\cup_{X})$ such that $h(d)=D_{1}\cup_{X}\ldots\cup_{X}D_{n}$.
Since $w\in h(d)$, therefore let $i\leq n$ be a positive integer such that $w\in D_{i}$.
Hence, $(D_{i},w)$ is a couple in $W$.
Moreover, $(D_{i},w)\in R((E,w))$ and $D_{i}\subseteq h(d)$.
Thus, $R((E,w))\cap h^{\prime}(d)\not=\emptyset$: a contradiction.
\end{itemize}
\item In the latter case, since $h$ is a surjective embedding of $({\mathcal R},0_{{\mathcal R}},\star_{{\mathcal R}},\cup_{{\mathcal R}},\vdash_{{\mathcal R}})$ in $(RC(X,\tau),0_{X},\star_{X},\cup_{X},\vdash_{X})$, therefore $(h(a),h(b))\not\vdash_{X}h(d)$.
Consequently, $h(a)\cap h(b)\not\subseteq h(d)$.
Let $w\in X$ be such that $w\in h(a)$, $w\in h(b)$ and $w\not\in h(d)$.
Since the Boolean algebra $(RC(X,\tau),0_{X},\star_{X},\cup_{X})$ of all regular closed subsets of $X$ is finite, therefore let $m$ and $n$ be nonnegative integers and $A_{1},\ldots,A_{m}$ and $B_{1},\ldots,B_{n}$ be atoms of $(RC(X,\tau),0_{X},\star_{X},\cup_{X})$ such that $h(a)=A_{1}\cup_{X}\ldots\cup_{X}A_{m}$ and $h(b)=B_{1}\cup_{X}\ldots\cup_{X}B_{n}$.
Since $w\in h(a)$ and $w\in h(b)$, therefore let $i\leq m$ and $j\leq n$ be positive integers such that $w\in A_{i}$ and $w\in B_{j}$.
Since $A_{i}$ and $B_{j}$ are atoms in $(RC(X,\tau),0_{X},\star_{X},\cup_{X})$, therefore the couples $(A_{i},w)$ and $(B_{j},w)$ are in $W$.
Since $A_{i}\subseteq h(a)$ and $B_{j}\subseteq h(b)$, therefore $(A_{i},w)\in h^{\prime}(a)$ and $(B_{j},w)\in h^{\prime}(b)$.
Since the Boolean algebra $(RC(X,\tau),0_{X},\star_{X},\cup_{X})$ of all regular closed subsets of $X$ is finite, therefore let $E$ be an atom of $(RC(X,\tau),0_{X},\star_{X},\cup_{X})$ such that $w\in E$.
Hence, the couple $(E,w)$ is in $W$.
Moreover, $R((E,w),(A_{i},w))$ and $R((E,w),(B_{j},w))$.
Since $(A_{i},w)\in h^{\prime}(a)$ and $(B_{j},w)\in h^{\prime}(b)$, therefore $R((E,w))\cap h^{\prime}(a)\not=\emptyset$ and $R((E,w))\cap h^{\prime}(b)\not=\emptyset$.
Since $(h^{\prime}(a),h^{\prime}(b))\vdash_{W}h^{\prime}(d)$, therefore $R((E,w))\cap h^{\prime}(d)\not=\emptyset$.
Let $(E^{\prime},w^{\prime})$ be a couple in $W$ such that $R((E,w),(E^{\prime},w^{\prime}))$ and $(E^{\prime},w^{\prime})\in h^{\prime}(d)$.
Thus, $w=w^{\prime}$, $w^{\prime}\in E^{\prime}$ and $E^{\prime}\subseteq h(d)$.
Consequently, $w\in h(d)$: a contradiction.
\end{itemize}
Hence, $(a,b)\vdash_{{\mathcal R}}d$ iff $(h^{\prime}(a),h^{\prime}(b))\vdash_{W}h^{\prime}(d)$.
\end{proof}
\\
\\
This completes the proof of Proposition~\ref{proposition:relational:representations:of:type:1:of:finite:extended:contact:algebras}.
\end{proof}
\section{Equivalence frames of type $2$}\label{subsection:Equivalence:frames:of:type:2}
The weak point of Proposition~\ref{proposition:relational:representations:of:type:1:of:finite:extended:contact:algebras} is that it does not say whether the embedding preserves the relation of internal connectedness.
In this section, we introduce another type of equivalence frames with which we will be able to embed any finite extended contact algebra while preserving its relation of internal connectedness.
An {\it equivalence frame of type $2$} is a structure of the form $(W,R_{1},R_{2})$ where $W$ is a nonempty set and $R_{1}$ and $R_{2}$ are equivalence relations on $W$.
In an equivalence frame $(W,R_{1},R_{2})$ of type $2$, the {\it equivalence class of $s\in W$ modulo $R_{1}$} will be denoted $R_{1}(s)$ and the {\it equivalence class of $s\in W$ modulo $R_{2}$} will be denoted $R_{2}(s)$.
Moreover, for all $s\in W$, $R_{1}(R_{2}(s))$ will denote the union of all $R_{1}(t)$ when $t$ ranges over $R_{2}(s)$.
Given an equivalence frame $(W,R_{1},R_{2})$ of type $2$, let $\vdash_{W}$ be the ternary relation on $W$'s powerset defined by
\begin{itemize}
\item $(A,B)\vdash_{W}D$ iff the intersection of $A$ and $B$ is included in $D$ and for all $s\in W$, if $R_{1}(R_{2}(s))$ intersects both $A$ and $B$ then $R_{1}(R_{2}(s))$ intersects $D$.
\end{itemize}
The reader may easily verify that the structure $({\mathcal P}(W),0_{W},\star_{W},\cup_{W},\vdash_{W})$ is an extended contact algebra.
With the following proposition, one can say that these extended contact algebras are typical examples of extended contact algebras.
\begin{proposition}\label{proposition:relational:representations:of:type:2:of:finite:extended:contact:algebras}
Let $({\mathcal R},0_{{\mathcal R}},\star_{{\mathcal R}},\cup_{{\mathcal R}},\vdash_{{\mathcal R}})$ be a finite extended contact algebra.
There exists an equivalence frame $(W,R_{1},R_{2})$ of type $2$ and an embedding of $({\mathcal R},0_{{\mathcal R}},\star_{{\mathcal R}},
$\linebreak$
\cup_{{\mathcal R}},\vdash_{{\mathcal R}})$ in $({\mathcal P}(W),0_{W},\star_{W},\cup_{W},\vdash_{W})$ preserving the relation of internal connectedness.
\end{proposition}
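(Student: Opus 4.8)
The plan is to follow the architecture of the proof of Proposition~\ref{proposition:relational:representations:of:type:1:of:finite:extended:contact:algebras}, enriching it with a second equivalence relation tailored to internal connectedness. First I would invoke Proposition~\ref{proposition:topological:representations:of:extended:contact:algebras} to obtain a finite topological space $(X,\tau)$ and a surjective embedding $h$ of $({\mathcal R},0_{{\mathcal R}},\star_{{\mathcal R}},\cup_{{\mathcal R}},\vdash_{{\mathcal R}})$ into $(RC(X,\tau),0_{X},\star_{X},\cup_{X},\vdash_{X})$, so that $(RC(X,\tau),0_{X},\star_{X},\cup_{X},\vdash_{X})$ is finite and it suffices to embed this standard algebra into a type-$2$ frame while preserving $c^{\circ}$. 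As in the type-$1$ case I would take $W$ to be the set of couples $(A,s)$ with $A$ an atom of $(RC(X,\tau),0_{X},\star_{X},\cup_{X})$ and $s\in A$, and set $h^{\prime}(a)=\{(A,s)\in W:\ A\subseteq h(a)\}$.

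The two equivalence relations would play complementary roles. For $R_{2}$ I would reuse the type-$1$ relation $R_{2}((A,s),(B,t))$ iff $s=t$, whose sole purpose is to reproduce the behaviour of $\vdash_{X}$: since every region of the finite algebra is the union of the atoms below it, a point $s$ lies in $h(a)$ iff some atom through $s$ is contained in $h(a)$, which is exactly what is needed for the block-meeting clause of $\vdash_{W}$ to collapse to the set inclusion $h(a)\cap h(b)\subseteq h(d)$. The relation $R_{1}$ would instead encode interior adjacency of atoms: two atoms $A,B$ are to be linked, at a point $s\in A\cap B$, precisely when $s\in\operatorname{Int}_{\tau}(A\cup_{X}B)$, i.e. when the junction between $A$ and $B$ survives in the interior. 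Since raw adjacency is not transitive, I would turn it into an equivalence relation by declaring $R_{1}((A,s),(B,t))$ to hold iff $s=t$ and $A,B$ lie in the same connected component of the adjacency graph of the atoms through $s$; the composite blocks $R_{1}(R_{2}(\cdot))$, which need not form a partition, then realise the adjacency hypergraph I want.

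With these definitions in hand, items $(1)$--$(5)$ of the analogue of Lemma~\ref{lemma:for:the:proposition:relational:representations:of:type:1:of:finite:extended:contact:algebras} (injectivity and preservation of $0$, $\star$, $\cup$ and $\vdash$) would be proved exactly as in the type-$1$ argument, the only new point being to recheck that the block-meeting clause of $\vdash_{W}$ computed with $R_{1}(R_{2}(\cdot))$ still reduces to membership of the underlying point in the region; by the remark above this meeting condition is unchanged, so $\vdash$ is preserved. The genuinely new content is the preservation of $c^{\circ}$. Here I would first establish the combinatorial equivalence, valid in any finite space, that $\operatorname{Int}_{\tau}(h(a))$ is connected iff the atoms contained in $h(a)$ form a connected subgraph under interior adjacency: one direction splits a disconnection of the atom graph into two regular closed pieces whose interiors cover $\operatorname{Int}_{\tau}(h(a))$, and the other produces, from any interior junction point, an adjacency edge. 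I would then translate this, via the characterisation $c^{\circ}_{X}(A)$ iff every decomposition $A=B\cup_{X}D$ into nonempty regions has $(B,D)\not\vdash_{X}A^{\star_{X}}$, into the statement that $c^{\circ}_{W}(h^{\prime}(a))$ holds iff the hypergraph on $h^{\prime}(a)$ whose hyperedges are the blocks $R_{1}(R_{2}(w))\subseteq h^{\prime}(a)$ is connected.

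The main obstacle will be calibrating $R_{1}$ so that this last hypergraph has exactly the right connectivity uniformly in $a$. The blocks must be local enough to sit inside $h^{\prime}(a)$ even when $h(a)$ is a small internally connected region --- so that connectivity is not lost --- yet coarse enough that two atoms get linked only through genuine interior junctions --- so that a disconnected interior is not spuriously connected. Pinning down the equivalence relation $R_{1}$ that threads this needle while remaining transitive, and verifying that ``$R_{1}(R_{2}(w))\subseteq h^{\prime}(a)$'' corresponds exactly to ``the junction recorded by $w$ lies in $\operatorname{Int}_{\tau}(h(a))$'', is where the delicate work lies; once this correspondence is secured, the equivalence of the two connectivity notions, and hence the preservation of $c^{\circ}$, follows, completing the proof of the proposition.
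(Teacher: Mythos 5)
Your overall architecture (invoking Proposition~\ref{proposition:topological:representations:of:extended:contact:algebras} to get a finite space and surjective embedding $h$, taking $W$ to be the atom--point couples, $R_{2}$ = same point, and $h^{\prime}(a)=\{(A,s)\in W:\ A\subseteq h(a)\}$) matches the paper, but your choice of $R_{1}$ fails, and it is precisely the point you yourself flag as unresolved. Because your $R_{1}((A,s),(B,t))$ requires $s=t$, it is a refinement of $R_{2}$; consequently, for every $(E,w)\in W$, the composite block $R_{1}(R_{2}((E,w)))$ equals $R_{2}((E,w))$: every atom $A$ through $w$ lies in the same adjacency component as itself, so the union of the $R_{1}$-classes over $R_{2}((E,w))$ returns exactly the couples with second coordinate $w$. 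Your type-$2$ frame therefore induces the same relation $\vdash_{W}$ as the type-$1$ frame of Proposition~\ref{proposition:relational:representations:of:type:1:of:finite:extended:contact:algebras}, and that construction does not preserve internal connectedness. Concretely: let ${\mathcal R}$ be the two-element algebra, represented over $X=\{1,2\}$ with $\tau=\{\emptyset,\{1\},X\}$; then $RC(X,\tau)=\{\emptyset,X\}$, the only atom is $X$, $W=\{(X,1),(X,2)\}$, and all composite blocks are singletons. The split $B=\{(X,1)\}$, $D=\{(X,2)\}$ of $h^{\prime}(1_{{\mathcal R}})$ satisfies $(B,D)\vdash_{W}h^{\prime}(1_{{\mathcal R}})^{\star_{W}}=\emptyset$ vacuously, so $h^{\prime}(1_{{\mathcal R}})$ is not internally connected in the frame algebra, although $1_{{\mathcal R}}$ is internally connected in ${\mathcal R}$.

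The paper's solution is far simpler than the adjacency-graph machinery you try to build: it takes $R_{1}((A,s),(B,t))$ iff $A=B$. Then the composite block of $(E,w)$ is $\{(A,s)\in W:\ w\in A\}$, i.e. \emph{all} copies of all atoms through $w$. This block still meets $h^{\prime}(a)$ iff $w\in h(a)$, so preservation of $\vdash$ (item~$(5)$ of Lemma~\ref{first:lemma:for:the:proposition:relational:representations:of:type:2:of:finite:extended:contact:algebras}) goes through essentially as in type $1$; but it additionally glues together all couples sharing an atom, which rules out the degenerate splits above. No interior-adjacency information is needed at all: the paper never characterizes connectedness of $\operatorname{Int}_{\tau}(h(a))$ graph-theoretically. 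Instead it uses the algebraic characterization of $c^{\circ}$ via $\vdash$ and shows (Claims~\ref{claim:the:as:are:non:zero}--\ref{claim:the:as:are:dash:with:star:of:a}) that an arbitrary frame-level split $h^{\prime}(a)=A_{1}^{\prime}\cup_{W}A_{2}^{\prime}$ with $(A_{1}^{\prime},A_{2}^{\prime})\vdash_{W}h^{\prime}(a)^{\star_{W}}$ yields an algebraic split $a=a_{1}\cup_{{\mathcal R}}a_{2}$ with $a_{1},a_{2}\not=0_{{\mathcal R}}$ and $(a_{1},a_{2})\vdash_{{\mathcal R}}a^{\star_{{\mathcal R}}}$, where $a_{i}$ is the least upper bound of the regions $b$ with $h(b)$ contained in the union of the atoms occurring in $A_{i}^{\prime}$. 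This least-upper-bound step is the real content missing from your plan: the difficulty is not encoding adjacency, but handling splits of $h^{\prime}(a)$ by arbitrary subsets of $W$ that need not be of the form $h^{\prime}(b)$.
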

\begin{proof}
By Proposition~\ref{proposition:topological:representations:of:extended:contact:algebras}, let $(X,\tau)$ be a topological space and $h$ be a surjective embedding of $({\mathcal R},0_{{\mathcal R}},\star_{{\mathcal R}},\cup_{{\mathcal R}},\vdash_{{\mathcal R}})$ in $(RC(X,\tau),0_{X},\star_{X},\cup_{X},\vdash_{X})$.
As proved in~\cite[Chapter~$2$]{Ivanova:2016}, the topological space $(X,\tau)$ is finite.
Hence, the Boolean algebra $(RC(X,
$\linebreak$
\tau),0_{X},\star_{X},\cup_{X})$ of all regular closed subsets of $X$ is finite.
Let $(W,R_{1},R_{2})$ be the structure where
\begin{itemize}
\item $W$ is the set of all couples of the form $(A,s)$ where $A\in RC(X,\tau)$ and $s\in X$ are such that $A$ is an atom of $(RC(X,\tau),0_{X},\star_{X},\cup_{X})$ and $s\in A$,
\item $R_{1}$ is the binary relation on $W$ defined by $R_{1}((A,s),(B,t))$ iff $A=B$,
\item $R_{2}$ is the binary relation on $W$ defined by $R_{2}((A,s),(B,t))$ iff $s=t$.
\end{itemize}
Obviously, $(W,R_{1},R_{2})$ is an equivalence frame of type $2$.
Let $h^{\prime}$ be the function associating to each region in ${\mathcal R}$ a subset of $W$ defined by
\begin{itemize}
\item $h^{\prime}(a)$ is the set of all $(A,s)\in W$ such that $A\subseteq h(a)$.
\end{itemize}
In order to prove that $h^{\prime}$ is an embedding of $({\mathcal R},0_{{\mathcal R}},\star_{{\mathcal R}},\cup_{{\mathcal R}},\vdash_{{\mathcal R}})$ in $({\mathcal P}(W),0_{W},\star_{W},
$\linebreak$
\cup_{W},\vdash_{W})$, let us prove the following
\begin{lemma}\label{first:lemma:for:the:proposition:relational:representations:of:type:2:of:finite:extended:contact:algebras}
The following conditions hold:
\begin{enumerate}
\item $h^{\prime}$ is injective.
\item $h^{\prime}(0_{{\mathcal R}})=0_{W}$.
\item For all regions $a$ in ${\mathcal R}$, $h^{\prime}(a^{\star_{{\mathcal R}}})=h^{\prime}(a)^{\star_{W}}$.
\item For all regions $a,b$ in ${\mathcal R}$, $h^{\prime}(a\cup_{{\mathcal R}}b)=h^{\prime}(a)\cup_{W}h^{\prime}(b)$.
\item For all regions $a,b,d$ in ${\mathcal R}$, $(a,b)\vdash_{{\mathcal R}}d$ iff $(h^{\prime}(a),h^{\prime}(b))\vdash_{W}h^{\prime}(d)$.
\end{enumerate}
\end{lemma}
\begin{proof}
The proofs of items~$(1)$--$(4)$ are similar to the proofs of the corresponding items in Lemma~\ref{lemma:for:the:proposition:relational:representations:of:type:1:of:finite:extended:contact:algebras}.
\\
\\
$(5)$~Let $a,b,d$ be regions in ${\mathcal R}$.
We demonstrate $(a,b)\vdash_{{\mathcal R}}d$ iff $(h^{\prime}(a),h^{\prime}(b))\vdash_{W}h^{\prime}(d)$.
Suppose $(a,b)\vdash_{{\mathcal R}}d$ not-iff $(h^{\prime}(a),h^{\prime}(b))\vdash_{W}h^{\prime}(d)$.
Hence, $(a,b)\vdash_{{\mathcal R}}d$ and $(h^{\prime}(a),h^{\prime}(b))\not\vdash_{W}h^{\prime}(d)$, or $(h^{\prime}(a),h^{\prime}(b))\vdash_{W}h^{\prime}(d)$ and $(a,b)\not\vdash_{{\mathcal R}}d$.
We have to consider two cases.
\begin{itemize}
\item In the former case, since $h$ is a surjective embedding of $({\mathcal R},0_{{\mathcal R}},\star_{{\mathcal R}},\cup_{{\mathcal R}},\vdash_{{\mathcal R}})$ in $(RC(X,\tau),0_{X},\star_{X},\cup_{X},\vdash_{X})$, therefore $(h(a),h(b))\vdash_{X}h(d)$.
Thus, $h(a)\cap h(b)\subseteq h(d)$.
Since $(h^{\prime}(a),h^{\prime}(b))\not\vdash_{W}h^{\prime}(d)$, therefore $h^{\prime}(a)\cap h^{\prime}(b)\not\subseteq h^{\prime}(d)$, or there exists a couple $(E,w)$ in $W$ such that $R_{1}(R_{2}((E,w)))\cap h^{\prime}(a)\not=\emptyset$, $R_{1}(R_{2}((E,w)))\cap h^{\prime}(b)\not=\emptyset$ and $R_{1}(R_{2}((E,w)))\cap h^{\prime}(d)=\emptyset$.
We have to consider two subcases.
\begin{itemize}
\item In the former subcase, let $(E,w)$ be a couple in $W$ such that $(E,w)\in h^{\prime}(a)$, $(E,w)\in h^{\prime}(b)$ and $(E,w)\not\in h^{\prime}(d)$.
Consequently, $E$ is an atom of $(RC(X,\tau),0_{X},\star_{X},\cup_{X})$.
Moreover, $E\subseteq h(a)$, $E\subseteq h(b)$ and $E\not\subseteq h(d)$.
Hence, $E\subseteq h(a)\cap h(b)$.
Since $h(a)\cap h(b)\subseteq h(d)$, therefore $E\subseteq h(d)$: a contradiction.
\item In the latter subcase, let $(E,w)$ be a couple in $W$ such that $R_{1}(R_{2}((E,w)))
$\linebreak$
\cap h^{\prime}(a)\not=\emptyset$, $R_{1}(R_{2}((E,w)))\cap h^{\prime}(b)\not=\emptyset$ and $R_{1}(R_{2}((E,w)))\cap h^{\prime}(d)=\emptyset$.
Let $(A,s)$, $(A^{\prime},s^{\prime})$, $(B,t)$ and $(B^{\prime},t^{\prime})$ be couples in $W$ such that $R_{2}((E,w),(A^{\prime},s^{\prime}))$, $R_{1}((A^{\prime},s^{\prime}),(A,s))$, $(A,s)\in h^{\prime}(a)$, $R_{2}((E,w),(B^{\prime},
$\linebreak$
t^{\prime}))$, $R_{1}((B^{\prime},t^{\prime}),(B,t))$ and $(B,t)\in h^{\prime}(b)$.
Thus, $A$, $A^{\prime}$, $B$ and $B^{\prime}$ are atoms of $(RC(X,\tau),0_{X},\star_{X},\cup_{X})$ such that $s\in A$, $s^{\prime}\in A^{\prime}$, $t\in B$ and $t^{\prime}\in B^{\prime}$.
Moreover, $w=s^{\prime}$, $A^{\prime}=A$, $A\subseteq h(a)$, $w=t^{\prime}$, $B^{\prime}=B$ and $B\subseteq h(b)$.
Consequently, $w\in h(a)\cap h(b)$.
Since $h(a)\cap h(b)\subseteq h(d)$, therefore $w\in h(d)$.
Since the Boolean algebra $(RC(X,\tau),0_{X},\star_{X},\cup_{X})$ of all regular closed subsets of $X$ is finite, therefore let $n$ be a nonnegative integer and $D_{1},\ldots,D_{n}$ be atoms of $(RC(X,\tau),0_{X},\star_{X},\cup_{X})$ such that $h(d)=D_{1}\cup_{X}\ldots\cup_{X}D_{n}$.
Since $w\in h(d)$, therefore let $i\leq n$ be a positive integer such that $w\in D_{i}$.
Hence, $(D_{i},w)$ is a couple in $W$.
Moreover, $(D_{i},w)\in R_{1}(R_{2}((E,w)))$ and $D_{i}\subseteq h(d)$.
Thus, $R_{1}(R_{2}((E,w)))\cap h^{\prime}(d)\not=\emptyset$: a contradiction.
\end{itemize}
\item In the latter case, since $h$ is a surjective embedding of $({\mathcal R},0_{{\mathcal R}},\star_{{\mathcal R}},\cup_{{\mathcal R}},\vdash_{{\mathcal R}})$ in $(RC(X,\tau),0_{X},\star_{X},\cup_{X},\vdash_{X})$, therefore $(h(a),h(b))\not\vdash_{X}h(d)$.
Consequently, $h(a)\cap h(b)\not\subseteq h(d)$.
Let $w\in X$ be such that $w\in h(a)$, $w\in h(b)$ and $w\not\in h(d)$.
Since the Boolean algebra $(RC(X,\tau),0_{X},\star_{X},\cup_{X})$ of all regular closed subsets of $X$ is finite, therefore let $m$ and $n$ be nonnegative integers and $A_{1},\ldots,A_{m}$ and $B_{1},\ldots,B_{n}$ be atoms of $(RC(X,\tau),0_{X},\star_{X},\cup_{X})$ such that $h(a)=A_{1}\cup_{X}\ldots\cup_{X}A_{m}$ and $h(b)=B_{1}\cup_{X}\ldots\cup_{X}B_{n}$.
Since $w\in h(a)$ and $w\in h(b)$, therefore let $i\leq m$ and $j\leq n$ be positive integers such that $w\in A_{i}$ and $w\in B_{j}$.
Since $A_{i}$ and $B_{j}$ are atoms in $(RC(X,\tau),0_{X},\star_{X},\cup_{X})$, therefore the couples $(A_{i},w)$ and $(B_{j},w)$ are in $W$.
Since $A_{i}\subseteq h(a)$ and $B_{j}\subseteq h(b)$, therefore $(A_{i},w)\in h^{\prime}(a)$ and $(B_{j},w)\in h^{\prime}(b)$.
Since the Boolean algebra $(RC(X,\tau),0_{X},\star_{X},\cup_{X})$ of all regular closed subsets of $X$ is finite, therefore let $E$ be an atom of $(RC(X,\tau),0_{X},\star_{X},\cup_{X})$ such that $w\in E$.
Hence, the couple $(E,w)$ is in $W$.
Moreover, $(A_{i},w)\in R_{1}(R_{2}((E,w)))$ and $(B_{j},w)\in R_{1}(R_{2}((E,w)))$.
Since $(A_{i},w)\in h^{\prime}(a)$ and $(B_{j},w)\in h^{\prime}(b)$, therefore
\linebreak$
R_{1}(R_{2}((E,w)))\cap h^{\prime}(a)\not=\emptyset$ and $R_{1}(R_{2}((E,w)))\cap h^{\prime}(b)\not=\emptyset$.
Since $(h^{\prime}(a),
$\linebreak$
h^{\prime}(b))\vdash_{W}h^{\prime}(d)$, therefore $R_{1}(R_{2}((E,w)))\cap h^{\prime}(d)\not=\emptyset$.
Let $(E^{\prime},w^{\prime})$ and $(E^{\prime\prime},w^{\prime\prime})$ be couples in $W$ such that $R_{2}((E,w),(E^{\prime\prime},w^{\prime\prime}))$, $R_{1}((E^{\prime\prime},w^{\prime\prime}),(E^{\prime},
$\linebreak$
w^{\prime}))$ and $(E^{\prime},w^{\prime})\in h^{\prime}(d)$.
Thus, $w=w^{\prime\prime}$, $w^{\prime\prime}\in E^{\prime\prime}$, $E^{\prime\prime}=E^{\prime}$ and $E^{\prime}\subseteq h(d)$.
Consequently, $w\in h(d)$: a contradiction.
\end{itemize}
Hence, $(a,b)\vdash_{{\mathcal R}}d$ iff $(h^{\prime}(a),h^{\prime}(b))\vdash_{W}h^{\prime}(d)$.
\end{proof}
\\
\\
Now, let us prove the following
\begin{lemma}\label{second:lemma:for:the:proposition:relational:representations:of:type:2:of:finite:extended:contact:algebras}
For all regions $a$ in ${\mathcal R}$, $c^{\circ}_{{\mathcal R}}(a)$ iff $c^{\circ}_{W}(h^{\prime}(a))$.
\end{lemma}
\begin{proof}
Let $a$ be a region in ${\mathcal R}$.
We demonstrate $c^{\circ}_{{\mathcal R}}(a)$ iff $c^{\circ}_{W}(h^{\prime}(a))$.
Suppose $c^{\circ}_{{\mathcal R}}(a)$ not-iff $c^{\circ}_{W}(h^{\prime}(a))$.
Hence, $c^{\circ}_{{\mathcal R}}(a)$ and not-$c^{\circ}_{W}(h^{\prime}(a))$, or not-$c^{\circ}_{{\mathcal R}}(a)$ and $c^{\circ}_{W}(h^{\prime}(a))$.
\begin{itemize}
\item In the former case, let $A_{1}^{\prime},A_{2}^{\prime}$ be subsets of $W$, $A_{1}^{\prime},A_{2}^{\prime}\not=0_{W}$, such that $h^{\prime}(a)=A_{1}^{\prime}\cup_{W}A_{2}^{\prime}$ and $(A_{1}^{\prime},A_{2}^{\prime})\vdash_{W}h^{\prime}(a)^{\star_{W}}$.
By item~$(3)$ of Lemma~\ref{first:lemma:for:the:proposition:relational:representations:of:type:2:of:finite:extended:contact:algebras}, $h^{\prime}(a^{\star_{{\mathcal R}}})=h^{\prime}(a)^{\star_{W}}$.
Since $(A_{1}^{\prime},A_{2}^{\prime})\vdash_{W}h^{\prime}(a)^{\star_{W}}$, therefore $(A_{1}^{\prime},A_{2}^{\prime})\vdash_{W}h^{\prime}(a^{\star_{{\mathcal R}}})$.
Since $W$ is finite and $A_{1}^{\prime}$ and $A_{2}^{\prime}$ are subsets of $W$, therefore let $n_{1}$ and $n_{2}$ be nonnegative integers and $(A_{1,1},s_{1,1}),\ldots,(A_{1,n_{1}},s_{1,n_{1}})$ and $(A_{2,1},s_{2,1}),\ldots,(A_{2,n_{2}},
$\linebreak$
s_{2,n_{2}})$ be couples in $W$ such that $A_{1}^{\prime}=\{(A_{1,1},s_{1,1}),\ldots,(A_{1,n_{1}},s_{1,n_{1}})\}$ and $A_{2}^{\prime}=\{(A_{2,1},s_{2,1}),\ldots,(A_{2,n_{2}},s_{2,n_{2}})\}$.
Since ${\mathcal R}$ is finite, therefore let $a_{1}$ be the least upper bound in ${\mathcal R}$ of the set of all regions $b$ in ${\mathcal R}$ such that $h(b)\subseteq A_{1,1}\cup_{X}\ldots\cup_{X}A_{1,n_{1}}$ and $a_{2}$ be the least upper bound in ${\mathcal R}$ of the set of all regions $b$ in ${\mathcal R}$ such that $h(b)\subseteq A_{2,1}\cup_{X}\ldots\cup_{X}A_{2,n_{2}}$.
Obviously, $h(a_{1})\subseteq A_{1,1}\cup_{X}\ldots\cup_{X}A_{1,n_{1}}$ and $h(a_{2})\subseteq A_{2,1}\cup_{X}\ldots\cup_{X}A_{2,n_{2}}$.
\begin{claim}\label{claim:the:as:are:non:zero}
$a_{1}\not=0_{{\mathcal R}}$ and $a_{2}\not=0_{{\mathcal R}}$.
\end{claim}
\begin{claim}\label{claim:a:is:the:union:of:the:as}
$a=a_{1}\cup_{{\mathcal R}}a_{2}$.
\end{claim}
\begin{claim}\label{claim:the:as:are:dash:with:star:of:a}
$(a_{1},a_{2})\vdash_{{\mathcal R}}a^{\star_{{\mathcal R}}}$.
\end{claim}
By Claims~\ref{claim:the:as:are:non:zero}--\ref{claim:the:as:are:dash:with:star:of:a}, not-$c^{\circ}_{{\mathcal R}}(a)$: a contradiction.
\item In the latter case, let $a_{1},a_{2}$ be regions in ${\mathcal R}$, $a_{1},a_{2}\not=0_{{\mathcal R}}$, such that $a=a_{1}\cup_{{\mathcal R}}a_{2}$ and $(a_{1},a_{2})\vdash_{{\mathcal R}}a^{\star_{{\mathcal R}}}$.
By items~$(1)$ and~$(2)$ of Lemma~\ref{first:lemma:for:the:proposition:relational:representations:of:type:2:of:finite:extended:contact:algebras}, $h^{\prime}(a_{1}),h^{\prime}(a_{2})\not=0_{W}$.
Moreover, by items~$(3)$--$(5)$ of Lemma~\ref{first:lemma:for:the:proposition:relational:representations:of:type:2:of:finite:extended:contact:algebras}, $h^{\prime}(a)=h^{\prime}(a_{1})\cup_{W}h^{\prime}(a_{2})$ and $(h^{\prime}(a_{1}),h^{\prime}(a_{2}))\vdash_{W}h^{\prime}(a)^{\star_{W}}$.
Thus, not-$c^{\circ}_{W}(h^{\prime}(a))$: a contradiction.
\end{itemize}
Hence, $c^{\circ}_{{\mathcal R}}(a)$ iff $c^{\circ}_{W}(h^{\prime}(a))$.
\end{proof}
\\
\\
This completes the proof of Proposition~\ref{proposition:relational:representations:of:type:2:of:finite:extended:contact:algebras}.
\end{proof}
\section{Conclusion}\label{section:Final:remarks}
The above representation theorems for extended contact algebras open new perspectives for region-based theories of space.
We anticipate a number of further investigations.
\\
\\
Firstly, there is the question of the generalization of Proposition~\ref{proposition:relational:representations:of:extended:contact:algebras} to the class of all extended contact algebras, not only the weak algebras.
Can we find necessary and sufficient conditions such that every extended contact algebra that satisfy them can be embedded in the weak extended contact algebra defined over some parametrized frame?
Can these conditions be first-order conditions, or are they intrinsically second-order conditions?
What kind of correspondence can we obtain between topological spaces and parametrized frames within the context of the extended contact relation?
\\
\\
Secondly, there is the question of the generalization of Propositions~\ref{proposition:relational:representations:of:type:1:of:finite:extended:contact:algebras} and~\ref{proposition:relational:representations:of:type:2:of:finite:extended:contact:algebras} to the class of all extended contact algebras, not only the finite algebras.
Can we find necessary and sufficient conditions such that every extended contact algebra that satisfy them can be embedded in the extended contact algebra defined over some equivalence frames of type $1$, or $2$?
Can these conditions be first-order conditions, or are they intrinsically second-order conditions?
\\
\\
Thirdly, following the line of reasoning suggested by Wolter and Zakharyaschev~\cite{Wolter:Zakharyaschev:2000} and furthered in~\cite{Balbiani:Tinchev:Vakarelov:2007,Vakarelov:2007} for what concerns axiomatization/completeness issues and in~\cite{Kontchakov:Nenov:Pratt:Hartmann:Zakharyaschev:2013,Kontchakov:Pratt:Hartmann:Wolter:Zakharyaschev:2010,Kontchakov:Pratt:Hartmann:Zakharyaschev:2010,Kontchakov:Pratt:Hartmann:Zakharyaschev:2014} for what concerns decidability/complexity issues, one may interest in the properties of a quantifier-free first-order language to be interpreted in contact algebras such as the extended contact algebras discussed in this paper.
Can we transfer in this extended setting the axiomatizability results and the decidability results obtained within the more restricted context of the contact relation?
\section*{Acknowledgements}
We make a point of thanking Tinko Tinchev and Dimiter Vakarelov for many stimulating discussions in the field of Contact Logic.
Their comments have greatly helped us to improve the readability of our paper.
Philippe Balbiani and Tatyana Ivanova were partially supported by the programme RILA (contracts 34269VB and DRILA01/2/2015) and the Bulgarian National Science Fund (contract DN02/15/19.12.2016).
\section*{Annex}
{\bf Proof of Claim~\ref{claim:universal:condition:on:sa:and:tb}.}
Let $a^{\prime}\in s_{a}$ and $b^{\prime}\in t_{b}$.
We demonstrate $(a^{\prime},b^{\prime})\not\vdash_{{\mathcal R}}d$.
Suppose $(a^{\prime},b^{\prime})\vdash_{{\mathcal R}}d$.
Since $a^{\prime}\in s_{a}$ and $b^{\prime}\in t_{b}$, therefore $a\leq_{{\mathcal R}}a^{\prime}$ and $b\leq_{{\mathcal R}}b^{\prime}$.
Since $(a^{\prime},b^{\prime})\vdash_{{\mathcal R}}d$, therefore $(a,b)\vdash_{{\mathcal R}}d$: a contradiction.
Hence, $(a^{\prime},b^{\prime})\not\vdash_{{\mathcal R}}d$.
\\
\\
{\bf Proof of Claim~\ref{claim:sa:and:tb:are:filters}.}
By classical results in the theory of filters and ideals~\cite{Givant:Halmos:2009}.
\\
\\
{\bf Proof of Claim~\ref{claim:about:ul:and:vr}.}
We demonstrate $u^{l}$ is an ideal in the Boolean algebra $({\mathcal R},0_{{\mathcal R}},\star_{{\mathcal R}},
$\linebreak$
\cup_{{\mathcal R}})$.
\\
\\
Firstly, suppose $0_{{\mathcal R}}\not\in u^{l}$.
Since $u$ is a filter in the Boolean algebra $({\mathcal R},0_{{\mathcal R}},\star_{{\mathcal R}},
$\linebreak$
\cup_{{\mathcal R}})$, therefore $1_{{\mathcal R}}\in u$.
Since $0_{{\mathcal R}}\not\in u^{l}$, therefore $(1_{{\mathcal R}},0_{{\mathcal R}})\not\vdash_{{\mathcal R}}d$: a contradiction with Proposition~\ref{proposition:consequences:definition:eca}.
Consequently, $0_{{\mathcal R}}\in u^{l}$.
\\
\\
Secondly, suppose $b_{1}^{\prime},b_{2}^{\prime}\in u^{l}$ are such that $b_{1}^{\prime}\cup_{{\mathcal R}}b_{2}^{\prime}\not\in u^{l}$.
Let $a_{1}^{\prime},a_{2}^{\prime}\in u$ be such that $(a_{1}^{\prime},b_{1}^{\prime})\vdash_{{\mathcal R}}d$ and $(a_{2}^{\prime},b_{2}^{\prime})\vdash_{{\mathcal R}}d$.
Since $u$ is a filter in the Boolean algebra $({\mathcal R},0_{{\mathcal R}},\star_{{\mathcal R}},\cup_{{\mathcal R}})$, therefore $a_{1}^{\prime}\cap_{{\mathcal R}}a_{2}^{\prime}\in u$.
Since $b_{1}^{\prime}\cup_{{\mathcal R}}b_{2}^{\prime}\not\in u^{l}$, therefore $(a_{1}^{\prime}\cap_{{\mathcal R}}a_{2}^{\prime},b_{1}^{\prime}\cup_{{\mathcal R}}b_{2}^{\prime})\not\vdash_{{\mathcal R}}d$.
By Proposition~\ref{proposition:consequences:definition:eca}, $(a_{1}^{\prime}\cap_{{\mathcal R}}a_{2}^{\prime},b_{1}^{\prime})\not\vdash_{{\mathcal R}}d$, or $(a_{1}^{\prime}\cap_{{\mathcal R}}a_{2}^{\prime},b_{2}^{\prime})\not\vdash_{{\mathcal R}}d$.
Since $(a_{1}^{\prime},b_{1}^{\prime})\vdash_{{\mathcal R}}d$, $(a_{2}^{\prime},b_{2}^{\prime})\vdash_{{\mathcal R}}d$, $a_{1}^{\prime}\cap_{{\mathcal R}}a_{2}^{\prime}\leq_{{\mathcal R}}a_{1}^{\prime}$ and $a_{1}^{\prime}\cap_{{\mathcal R}}a_{2}^{\prime}\leq_{{\mathcal R}}a_{2}^{\prime}$, therefore by Proposition~\ref{proposition:consequences:definition:eca}, $(a_{1}^{\prime}\cap_{{\mathcal R}}a_{2}^{\prime},b_{1}^{\prime})\vdash_{{\mathcal R}}d$ and $(a_{1}^{\prime}\cap_{{\mathcal R}}a_{2}^{\prime},b_{2}^{\prime})\vdash_{{\mathcal R}}d$: a contradiction.
\\
\\
Thirdly, suppose $b_{1}^{\prime}\in u^{l}$ and $b_{2}^{\prime}\in{\mathcal R}$ are such that $b_{1}^{\prime}\cap_{{\mathcal R}}b_{2}^{\prime}\not\in u^{l}$.
Let $a^{\prime}\in u$ be such that $(a^{\prime},b_{1}^{\prime})\vdash_{{\mathcal R}}d$.
Since $b_{1}^{\prime}\cap_{{\mathcal R}}b_{2}^{\prime}\not\in u^{l}$, therefore $(a^{\prime},b_{1}^{\prime}\cap_{{\mathcal R}}b_{2}^{\prime})\not\vdash_{{\mathcal R}}d$.
Since $(a^{\prime},b_{1}^{\prime})\vdash_{{\mathcal R}}d$ and $b_{1}^{\prime}\cap_{{\mathcal R}}b_{2}^{\prime}\leq_{{\mathcal R}}b_{1}^{\prime}$, therefore by Proposition~\ref{proposition:consequences:definition:eca}, $(a^{\prime},b_{1}^{\prime}\cap_{{\mathcal R}}b_{2}^{\prime})\vdash_{{\mathcal R}}d$: a contradiction.
\\
\\
The proof that $v^{r}$ is an ideal in the Boolean algebra $({\mathcal R},0_{{\mathcal R}},\star_{{\mathcal R}},\cup_{{\mathcal R}})$ is similar.
\\
\\
{\bf Proof of Claim~\ref{claim:about:three:equivalent:conditions}.}
$(1\Rightarrow2)$:
Suppose for all $a^{\prime}\in u$ and for all $b^{\prime}\in v$, $(a^{\prime},b^{\prime})\not\vdash_{{\mathcal R}}d$.
We demonstrate $u^{l}\cap v=\emptyset$.
Suppose $u^{l}\cap v\not=\emptyset$.
Let $b^{\prime\prime}$ be a region in ${\mathcal R}$ such that $b^{\prime\prime}\in u^{l}$ and $b^{\prime\prime}\in v$.
Hence, there exists $a^{\prime}\in u$ such that $(a^{\prime},b^{\prime\prime})\vdash_{{\mathcal R}}d$.
Since $b^{\prime\prime}\in v$, therefore there exists $a^{\prime}\in u$ and there exists $b^{\prime}\in v$ such that $(a^{\prime},b^{\prime})\vdash_{{\mathcal R}}d$: a contradiction.
Thus, $u^{l}\cap v=\emptyset$.
\\
$(2\Rightarrow1)$:
Suppose $u^{l}\cap v=\emptyset$.
We demonstrate for all $a^{\prime}\in u$ and for all $b^{\prime}\in v$, $(a^{\prime},b^{\prime})\not\vdash_{{\mathcal R}}d$.
Suppose there exists $a^{\prime}\in u$ and there exists $b^{\prime}\in v$ such that $(a^{\prime},b^{\prime})\vdash_{{\mathcal R}}d$.
Let $a^{\prime\prime}\in u$ and $b^{\prime\prime}\in v$ be such that $(a^{\prime\prime},b^{\prime\prime})\vdash_{{\mathcal R}}d$.
Hence, there exists $a^{\prime}\in u$ such that $(a^{\prime},b^{\prime\prime})\vdash_{{\mathcal R}}d$.
Thus, $b^{\prime\prime}\in u^{l}$.
Since $b^{\prime\prime}\in v$, therefore $u^{l}\cap v\not=\emptyset$: a contradiction.
Consequently, for all $a^{\prime}\in u$ and for all $b^{\prime}\in v$, $(a^{\prime},b^{\prime})\not\vdash_{{\mathcal R}}d$.
\\
$(1\Rightarrow3)$ and $(3\Rightarrow1)$:
Similar to $(1\Rightarrow2)$ and $(2\Rightarrow1)$.
\\
\\
{\bf Proof of Claim~\ref{claim:the:as:are:non:zero}.}
Suppose $a_{1}=0_{{\mathcal R}}$, or $a_{2}=0_{{\mathcal R}}$.
Without loss of generality, suppose $a_{1}=0_{{\mathcal R}}$.
Since $A_{1}^{\prime}\not=0_{W}$, therefore $n_{1}\geq1$.
Let $i\leq n_{1}$ be a positive integer.
Hence, $A_{1,i}$ is an atom of $(RC(X,\tau),0_{X},\star_{X},\cup_{X})$.
Since $h$ is a surjective embedding of $({\mathcal R},0_{{\mathcal R}},\star_{{\mathcal R}},\cup_{{\mathcal R}},\vdash_{{\mathcal R}})$ in $(RC(X,\tau),0_{X},\star_{X},\cup_{X},\vdash_{X})$, therefore let $b$ be a region in ${\mathcal R}$ such that $h(b)=A_{1,i}$.
Thus, $h(b)\subseteq A_{1,1}\cup_{X}\ldots\cup_{X}A_{1,n_{1}}$.
Consequently, $b\leq_{{\mathcal R}}a_{1}$.
Since $a_{1}=0_{{\mathcal R}}$, therefore $b=0_{{\mathcal R}}$.
Since $A_{1,i}$ is an atom of $(RC(X,\tau),0_{X},\star_{X},\cup_{X})$, therefore $A_{1,i}\not=0_{X}$.
Since $h(b)=A_{1,i}$, therefore $h(b)\not=0_{X}$.
Since $h$ is a surjective embedding of $({\mathcal R},0_{{\mathcal R}},\star_{{\mathcal R}},\cup_{{\mathcal R}},\vdash_{{\mathcal R}})$ in $(RC(X,\tau),0_{X},\star_{X},\cup_{X},\vdash_{X})$, therefore $b\not=0_{{\mathcal R}}$: a contradiction.
Hence, $a_{1}\not=0_{{\mathcal R}}$ and $a_{2}\not=0_{{\mathcal R}}$.
\\
\\
{\bf Proof of Claim~\ref{claim:a:is:the:union:of:the:as}.}
Suppose $a\not=a_{1}\cup_{{\mathcal R}}a_{2}$.
By items~$(1)$ and~$(4)$ of Lemma~\ref{first:lemma:for:the:proposition:relational:representations:of:type:2:of:finite:extended:contact:algebras}, $h^{\prime}(a)\not=h^{\prime}(a_{1})\cup_{W}h^{\prime}(a_{2})$.
Hence, $h^{\prime}(a)\not\subseteq h^{\prime}(a_{1})\cup_{W}h^{\prime}(a_{2})$, or $h^{\prime}(a_{1})\cup_{W}h^{\prime}(a_{2})\not\subseteq h^{\prime}(a)$.
We have to consider two cases.
\begin{itemize}
\item In the former case, since $h^{\prime}(a)=A_{1}^{\prime}\cup_{W}A_{2}^{\prime}$, therefore $A_{1}^{\prime}\cup_{W}A_{2}^{\prime}\not\subseteq h^{\prime}(a_{1})\cup_{W}h^{\prime}(a_{2})$.
Thus, $A_{1}^{\prime}\not\subseteq h^{\prime}(a_{1})\cup_{W}h^{\prime}(a_{2})$, or $A_{2}^{\prime}\not\subseteq h^{\prime}(a_{1})\cup_{W}h^{\prime}(a_{2})$.
Without loss of generality, suppose $A_{1}^{\prime}\not\subseteq h^{\prime}(a_{1})\cup_{W}h^{\prime}(a_{2})$.
Consequently, $A_{1}^{\prime}\not\subseteq h^{\prime}(a_{1})$.
Let $i\leq n_{1}$ be a positive integer such that $(A_{1,i},s_{1,i})\not\in h^{\prime}(a_{1})$.
Hence, $A_{1,i}\not\subseteq h(a_{1})$.
Since $h$ is a surjective embedding of $({\mathcal R},0_{{\mathcal R}},\star_{{\mathcal R}},\cup_{{\mathcal R}},\vdash_{{\mathcal R}})$ in $(RC(X,\tau),0_{X},\star_{X},\cup_{X},\vdash_{X})$, therefore let $b$ be a region in ${\mathcal R}$ such that $h(b)=A_{1,i}$.
Thus, $h(b)\subseteq A_{1,1}\cup_{X}\ldots\cup_{X}A_{1,n_{1}}$.
Consequently, $b\leq_{{\mathcal R}}a_{1}$.
Since $h$ is a surjective embedding of $({\mathcal R},0_{{\mathcal R}},\star_{{\mathcal R}},\cup_{{\mathcal R}},\vdash_{{\mathcal R}})$ in $(RC(X,\tau),0_{X},\star_{X},\cup_{X},\vdash_{X})$, therefore $h(b)\subseteq h(a_{1})$.
Since $A_{1,i}\not\subseteq h(a_{1})$, therefore $h(b)\not=A_{1,i}$: a contradiction.
\item In the latter case, let $(B,t)$ be a couple in $W$ such that $(B,t)\in h^{\prime}(a_{1})\cup_{W}h^{\prime}(a_{2})$ and $(B,t)\not\in h^{\prime}(a)$.
Hence, $(B,t)\in h^{\prime}(a_{1})$, or $(B,t)\in h^{\prime}(a_{2})$.
Without loss of generality, suppose $(B,t)\in h^{\prime}(a_{1})$.
Thus, $B\subseteq h(a_{1})$.
Since $h(a_{1})\subseteq A_{1,1}\cup_{X}\ldots\cup_{X}A_{1,n_{1}}$, therefore $B\subseteq A_{1,1}\cup_{X}\ldots\cup_{X}A_{1,n_{1}}$.
Since $A_{1,1},\ldots,A_{1,n_{1}}$ and $B$ are atoms of $(RC(X,\tau),0_{X},\star_{X},\cup_{X})$, therefore let $i\leq n_{1}$ be a positive integer such that $B=A_{1,i}$.
Consequently, $(B,s_{1,i})\in A_{1}^{\prime}$.
Hence, $(B,s_{1,i})\in A_{1}^{\prime}\cup_{W}A_{2}^{\prime}$.
Since $h^{\prime}(a)=A_{1}^{\prime}\cup_{W}A_{2}^{\prime}$, therefore $(B,s_{1,i})\in h^{\prime}(a)$.
Thus, $B\subseteq h(a)$.
Consequently, $(B,t)\in h^{\prime}(a)$: a contradiction.
\end{itemize}
Hence, $a=a_{1}\cup_{{\mathcal R}}a_{2}$.
\\
\\
{\bf Proof of Claim~\ref{claim:the:as:are:dash:with:star:of:a}.}
Suppose $(a_{1},a_{2})\not\vdash_{{\mathcal R}}a^{\star_{{\mathcal R}}}$.
By item~$(5)$ of Lemma~\ref{first:lemma:for:the:proposition:relational:representations:of:type:2:of:finite:extended:contact:algebras}, $(h^{\prime}(a_{1}),
$\linebreak$
h^{\prime}(a_{2}))\not\vdash_{W}h^{\prime}(a^{\star_{{\mathcal R}}})$.
Hence, $h^{\prime}(a_{1})\cap_{W}h^{\prime}(a_{2})\not\subseteq h^{\prime}(a^{\star_{{\mathcal R}}})$, or there exists a couple $(B,t)$ in $W$ such that $R_{1}(R_{2}((B,t)))$ intersects both $h^{\prime}(a_{1})$ and $h^{\prime}(a_{2})$ and
\linebreak$
R_{1}(R_{2}((B,t)))$ does not intersect $h^{\prime}(a^{\star_{{\mathcal R}}})$.
We have to consider two cases.
\begin{itemize}
\item In the former case, let $(B,t)$ be a couple in $W$ such that $(B,t)\in h^{\prime}(a_{1})$, $(B,t)\in h^{\prime}(a_{2})$ and $(B,t)\not\in h^{\prime}(a^{\star_{{\mathcal R}}})$.
Thus, $B\subseteq h(a_{1})$, $B\subseteq h(a_{2})$ and $B\not\subseteq h(a^{\star_{{\mathcal R}}})$.
Since $h(a_{1})\subseteq A_{1,1}\cup_{X}\ldots\cup_{X}A_{1,n_{1}}$ and $h(a_{2})\subseteq A_{2,1}\cup_{X}\ldots\cup_{X}A_{2,n_{2}}$, therefore $B\subseteq A_{1,1}\cup_{X}\ldots\cup_{X}A_{1,n_{1}}$ and $B\subseteq A_{2,1}\cup_{X}\ldots\cup_{X}A_{2,n_{2}}$.
Since $A_{1,1},\ldots,A_{1,n_{1}}$, $A_{2,1},\ldots,A_{2,n_{2}}$ and $B$ are atoms of $(RC(X,\tau),0_{X},\star_{X},\cup_{X})$, therefore let $i\leq n_{1}$ and $j\leq n_{2}$ be positive integers such that $B=A_{1,i}$ and $B=A_{2,j}$.
Let $u\in Int_{\tau}(B)$.
Consequently, $(A_{1,i},s_{1,i})\in R_{1}(R_{2}((B,u)))$ and $(A_{2,j},s_{2,j})\in R_{1}(R_{2}((B,u)))$.
Hence, $R_{1}(R_{2}((B,u)))\cap A_{1}^{\prime}\not=\emptyset$ and $R_{1}(R_{2}((B,u)))\cap A_{2}^{\prime}\not=\emptyset$.
Since $(A_{1}^{\prime},A_{2}^{\prime})\vdash_{W}h^{\prime}(a)^{\star_{W}}$, therefore $R_{1}(R_{2}((B,
$\linebreak$
u)))\cap h^{\prime}(a)^{\star_{W}}\not=\emptyset$.
Let $(D,v)$ be a couple in $W$ such that $(D,v)\in R_{1}(R_{2}((B,
$\linebreak$
u)))$ and $(D,v)\in h^{\prime}(a)^{\star_{W}}$.
Since $u\in Int_{\tau}(B)$, therefore $B=D$.
Since $h$ is a surjective embedding of $({\mathcal R},0_{{\mathcal R}},\star_{{\mathcal R}},\cup_{{\mathcal R}},\vdash_{{\mathcal R}})$ in $(RC(X,\tau),0_{X},\star_{X},\cup_{X},\vdash_{X})$, therefore $h^{\prime}(a^{\star_{{\mathcal R}}})=h^{\prime}(a)^{\star_{W}}$.
Since $(D,v)\in h^{\prime}(a)^{\star_{W}}$, therefore $(D,v)\in h^{\prime}(a^{\star_{{\mathcal R}}})$.
Thus, $D\subseteq h(a^{\star_{{\mathcal R}}})$.
Since $B\not\subseteq h(a^{\star_{{\mathcal R}}})$, therefore $B\not=D$: a contradiction.
\item In the latter case, let $(B,t)$ be a couple in $W$ such that $R_{1}(R_{2}((B,t)))$ intersects both $h^{\prime}(a_{1})$ and $h^{\prime}(a_{2})$ and $R_{1}(R_{2}((B,t)))$ does not intersect $h^{\prime}(a^{\star_{{\mathcal R}}})$.
Let $(D_{1},u_{1}),(D_{2},u_{2})$ be couples in $W$ such that $(D_{1},u_{1})\in R_{1}(R_{2}((B,t)))$, $(D_{1},u_{1})\in h^{\prime}(a_{1})$, $(D_{2},u_{2})\in R_{1}(R_{2}((B,t)))$ and $(D_{2},
$\linebreak$
u_{2})\in h^{\prime}(a_{2})$.
Consequently, $D_{1}\subseteq h(a_{1})$ and $D_{2}\subseteq h(a_{2})$.
Since $h(a_{1})\subseteq A_{1,1}\cup_{X}\ldots\cup_{X}A_{1,n_{1}}$ and $h(a_{2})\subseteq A_{2,1}\cup_{X}\ldots\cup_{X}A_{2,n_{2}}$, therefore $D_{1}\subseteq A_{1,1}\cup_{X}\ldots\cup_{X}A_{1,n_{1}}$ and $D_{2}\subseteq A_{2,1}\cup_{X}\ldots\cup_{X}A_{2,n_{2}}$.
Since $A_{1,1},\ldots,A_{1,n_{1}}$, $A_{2,1},\ldots,A_{2,n_{2}}$, $D_{1}$ and $D_{2}$ are atoms of $(RC(X,\tau),0_{X},
$\linebreak$
\star_{X},\cup_{X})$, therefore let $i\leq n_{1}$ and $j\leq n_{2}$ be positive integers such that $D_{1}=A_{1,i}$ and $D_{2}=A_{2,j}$.
Hence, $(A_{1,i},s_{1,i})\in R_{1}(R_{2}((B,t)))$ and $(A_{2,j},s_{2,j})\in R_{1}(R_{2}((B,t)))$.
Thus, $R_{1}(R_{2}((B,t)))\cap A_{1}^{\prime}\not=\emptyset$ and $R_{1}(R_{2}((B,t)))\cap A_{2}^{\prime}\not=\emptyset$.
Since $(A_{1}^{\prime},A_{2}^{\prime})\vdash_{W}h^{\prime}(a)^{\star_{W}}$, therefore $R_{1}(R_{2}((B,t)))\cap h^{\prime}(a)^{\star_{W}}\not=\emptyset$.
Since $h$ is a surjective embedding of $({\mathcal R},0_{{\mathcal R}},\star_{{\mathcal R}},\cup_{{\mathcal R}},\vdash_{{\mathcal R}})$ in $(RC(X,\tau),0_{X},\star_{X},\cup_{X},
$\linebreak$
\vdash_{X})$, therefore $h^{\prime}(a^{\star_{{\mathcal R}}})=h^{\prime}(a)^{\star_{W}}$.
Since $R_{1}(R_{2}((B,t)))$ does not intersect $h^{\prime}(a^{\star_{{\mathcal R}}})$, therefore $R_{1}(R_{2}((B,t)))\cap h^{\prime}(a)^{\star_{W}}=\emptyset$: a contradiction.
\end{itemize}
Consequently, $(a_{1},a_{2})\vdash_{{\mathcal R}}a^{\star_{{\mathcal R}}}$.
\end{document}